\def\input@path{{styles/}{../styles/}}
\providecommand{\SODAVer}[1]{}%
\providecommand{\NotSODAVer}[1]{#1}%
\providecommand{\SODAVer}[1]{#1}%
\providecommand{\NotSODAVer}[1]{}%
\def\UseBibLatex{1}
\newcommand{\SarielComp}[1]{}
\newcommand{\NotSarielComp}[1]{#1}%
\newcommand{\SarielComp}[1]{#1}%
\newcommand{\NotSarielComp}[1]{}%
\newcommand{\IfPrinterVer}[2]{#2}%
\titleformat{\paragraph}[runin]
{\normalfont\bfseries}
{\theparagraph}
{1em}
{\addperiod}
\newcommand{\addperiod}[1]{#1.}
\theoremstyle{plain}%
\newtheorem{theorem}{Theorem}[section]
\newtheorem{lemma}[theorem]{Lemma}
\newtheorem{observation}[theorem]{Observation}
\theoremstyle{plain}%
\newtheorem*{remark:unnumbered}[theorem]{Remark}%
\newtheorem{remark}[theorem]{Remark}%
\newtheorem{defn}[theorem]{Definition}
\newtheorem{problem}[theorem]{Problem}
\newcommand{\myqedsymbol}{\rule{2mm}{2mm}}
\theoremstyle{nonumberplain}%
\newtheorem{proof}{Proof:}%
\definecolor{blue25emph}{rgb}{0, 0, 11}
\providecommand{\emphic}[2]{%
   \textcolor{blue25emph}{%
      \textbf{\emph{#1}}}%
   \index{#2}}
\definecolor{almostblack}{rgb}{0, 0, 0.3}
\providecommand{\emphw}[1]{{\textcolor{almostblack}{\emph{#1}}}}%
\providecommand{\emphi}[1]{\emphic{#1}{#1}}
\definecolor{almostblack}{rgb}{0, 0, 0.3}
\providecommand{\emphw}[1]{{\textcolor{almostblack}{\emph{#1}}}}%
\newcommand{\atgen}{\symbol{'100}}
\newcommand{\SarielThanks}[1]{\thanks{Department of Computer Science;
      University of Illinois; 201 N. Goodwin Avenue; Urbana, IL,
      61801, USA; {\tt sariel\atgen{}illinois.edu}; {\tt
         \url{http://sarielhp.org/}.} #1}}
\newcommand{\DavidThanks}[1]{%
   \thanks{%
      Department of Computer Science; %
      University of Illinois; %
      201 N. Goodwin Avenue; %
      Urbana, IL, 61801, USA; %
      {\tt dwzheng2\atgen{}illinois.edu}; %
      {\tt \url{https://davidzheng.web.illinois.edu/}.} #1}}
\newcommand{\HLink}[2]{\hyperref[#2]{#1~\ref*{#2}}}
\newcommand{\HLinkSuffix}[3]{\hyperref[#2]{#1\ref*{#2}{#3}}}
\newcommand{\figlab}[1]{\label{fig:#1}}
\newcommand{\figref}[1]{\HLink{Figure}{fig:#1}}
\newcommand{\thmlab}[1]{{\label{theo:#1}}}
\newcommand{\thmref}[1]{\HLink{Theorem}{theo:#1}}
\newcommand{\problab}[1]{\label{prob:#1}}%
\newcommand{\probref}[1]{\HLink{Problem}{prob:#1}}%
\newcommand{\probrefY}[2]{\hyperref[prob:#1]{#2}}
\newcommand{\seclab}[1]{\label{sec:#1}}
\newcommand{\secref}[1]{\HLink{Section}{sec:#1}}
\newcommand{\itemlab}[1]{\label{item:#1}}
\newcommand{\itemref}[1]{\HLinkSuffix{}{item:#1}{}}
\newcommand{\lemlab}[1]{\label{lemma:#1}}
\newcommand{\lemref}[1]{\HLink{Lemma}{lemma:#1}}%
\providecommand{\eqlab}[1]{}%
\renewcommand{\eqlab}[1]{\label{equation:#1}}
\newcommand{\Eqref}[1]{\HLinkSuffix{Eq.~(}{equation:#1}{)}}
\newcommand{\remove}[1]{}%
\newcommand{\Set}[2]{\left\{ #1 \;\middle\vert\; #2 \right\}}
\newcommand{\pth}[2][\!]{\mleft({#2}\mright)}%
\newcommand{\ceil}[1]{\left\lceil {#1} \right\rceil}
\newcommand{\floor}[1]{\left\lfloor {#1} \right\rfloor}
\newcommand{\cardin}[1]{\left| {#1} \right|}%
\renewcommand{\th}{th\xspace}
\renewcommand{\Re}{\mathbb{R}}%
\newlist{compactenumA}{enumerate}{5}%
\setlist[compactenumA]{topsep=0pt,itemsep=-1ex,partopsep=1ex,parsep=1ex,%
   label=(\Alph*)}%
\newlist{compactenuma}{enumerate}{5}%
\setlist[compactenuma]{topsep=0pt,itemsep=-1ex,partopsep=1ex,parsep=1ex,%
   label=(\alph*)}%
\newlist{compactenumI}{enumerate}{5}%
\setlist[compactenumI]{topsep=0pt,itemsep=-1ex,partopsep=1ex,parsep=1ex,%
   label=(\Roman*)}%
\newlist{compactenumi}{enumerate}{5}%
\setlist[compactenumi]{topsep=0pt,itemsep=-1ex,partopsep=1ex,parsep=1ex,%
   label=(\roman*)}%
\newlist{compactitem}{itemize}{5}%
\setlist[compactitem]{topsep=0pt,itemsep=-1ex,partopsep=1ex,parsep=1ex,%
   label=\ensuremath{\bullet}}%
\providecommand{\IntRange}[1]{\mleft\llbracket #1 \mright\rrbracket}
\newcommand{\IRX}[1]{\IntRange{#1}}%
\providecommand{\BibLatexMode}[1]{}
\providecommand{\BibTexMode}[1]{#1}
  \renewcommand{\BibLatexMode}[1]{}
  \renewcommand{\BibTexMode}[1]{#1}
  \renewcommand{\BibLatexMode}[1]{#1}
  \renewcommand{\BibTexMode}[1]{}
\providecommand{\Mh}[1]{#1}%
\newcommand{\tldO}{\scalerel*{\widetilde{O}}{j^2}}%
\newcommand{\hp}{\Mh{\mathcalb{g}}}
\newcommand{\hpA}{\Mh{\mathcalb{h}}}
\newcommand{\dualX}[1]{#1^{\star}}%
\newcommand{\dualDualX}[1]{#1^{\star\star}}%
\newcommand{\Line}{\Mh{\mathcalb{l}}}%
\newcommand{\OLine}{\Mh{{\psi}}}%
\newcommand{\OLines}{\Mh{\mathcal{O}}}
\providecommand{\W}{\Mh{\mathcal{W}}}%
\providecommand{\L}{\Mh{L}}%
\renewcommand{\L}{\Mh{L}}%
\newcommand{\R}{\Mh{R}}%
\providecommand{\H}{\Mh{H}}%
\renewcommand{\H}{\Mh{H}}%
\newcommand{\Q}{\Mh{Q}}%
\newcommand{\LinesX}[1]{\Mh{\mathrm{lines}}\pth{#1}}%
\providecommand{\P}{\Mh{P}}%
\renewcommand{\P}{\Mh{P}}%
\newcommand{\Polygon}{\Mh{D}}%
\newcommand{\opt}{\Mh{\mathcalb{o}}}
\newcommand{\VC}{\Term{VC}\xspace}%
\newcommand{\DAG}{\Term{DAG}\xspace}%
\newcommand{\MWU}{\Term{MWU}\xspace}%
\newcommand{\G}{\Mh{G}}
\newcommand{\RangeSet}{{\Mh{\mathcal{R}}}}
\newcommand{\pa}{\Mh{p}}%
\newcommand{\pb}{\Mh{q}}%
\newcommand{\etal}{\textit{et~al.}\xspace}
\newcommand{\eps}{\varepsilon}%
\newcommand{\IFF}{\iff}
\newcommand{\CR}{\Mh{\mathcal{C}}}%
\newcommand{\corrX}[1]{\Mh{\mathrm{corr}}\pth{#1}}%
\newcommand{\Net}{\Mh{\mathcal{N}}}%
\newcommand{\Term}[1]{\textsf{#1}}
\newcommand{\Arr}{\Mh{\EuScript{A}}}%
\newcommand{\ArrX}[1]{\Arr\pth{#1}}%
\newcommand{\GroundSet}{\Mh{\textsf{X}}}
\newcommand{\range}{\Mh{\mathbf{r}}}
\newcommand{\Dim}{\Mh{\delta}}%
\newcommand{\BadProb}{\varphi}
\newcommand{\region}{\Mh{\mathsf{D}}}%
\newcommand{\face}{\Mh{\mathsf{F}}}%
\newcommand{\cell}{\Mh{\mathsf{C}}}%
\newcommand{\body}{\Mh{\mathsf{B}}}%
\newcommand{\numS}{\Mh{k}}
\newcommand{\si}[1]{#1}
\newcommand{\p}{\Mh{p}}
\newcommand{\Inst}{\Mh{\mathcal{I}}}%
\newcommand{\simplex}{\Mh{\nabla}}
\newcommand{\CHX}[1]{\Mh{\mathcal{CH}}\pth{#1}}
\newcommand{\Caratheodory}{Carath\'eodory\xspace}
\newcommand{\fr}{\Mh{\xi}}%
\numberwithin{figure}{section}%
\numberwithin{table}{section}%
\numberwithin{equation}{section}%
\newcommand{\VT}{\Mh{\mathcal{T}}}%
\newcommand{\VTA}{\Mh{\mathcal{U}}}%
\newcommand{\slabX}[1]{\Mh{slab}\pth{#1}}%
\newcommand{\trap}{\Mh{\tau}}%
\newcommand{\LP}{\Term{LP}\xspace}%
\newcommand{\Polygons}{\Mh{\mathcal{D}}}%
\newcommand{\BPolygons}{\Mh{\Polygons_\mathrm{bad}}}%
\newcommand{\BRP}{\Mh{\mathcal{B}}}%
\newcommand{\pg}{\Mh{\sigma}}%
\newcommand{\nL}{\Mh{\mathsf{n}}}
\newcommand{\nP}{\Mh{\mathsf{m}}}
\newcommand{\indexX}[1]{\Mh{\mathrm{\Mh{sep}}}\pth{#1}}%
\newcommand{\valX}[1]{\Mh{v}\pth{#1}}
\begin{document}

\title{Halving by a Thousand Cuts or Punctures}

\NotSODAVer{%
   \author{%
      Sariel Har-Peled%
      \SarielThanks{%
         Work on this paper was partially supported by a NSF AF award
         CCF-1907400. %
      }%
      \and%
      \si{Da} Wei Zheng%
      \DavidThanks{}%
   }%
}

\date{\today}
\maketitle

\begin{abstract}
    For point sets $\P_1, \ldots, \P_\numS$, a set of lines $\L$ is
    \emph{halving} if any face of the arrangement $\ArrX{\L}$ contains
    at most $|\P_i|/2$ points of $\P_i$, for all $i$. We study the
    problem of computing a halving set of lines of minimal
    size. Surprisingly, we show a polynomial time algorithm that
    outputs a halving set of size $O(\opt^{3/2})$, where $\opt$ is the
    size of the optimal solution. Our solution relies on solving a new
    variant of the weak $\eps$-net problem for corridors, which we
    believe to be of independent interest.

    We also study other variants of this problem, including an
    alternative setting, where one needs to introduce a set of guards
    (i.e., points), such that no convex set avoiding the guards
    contains more than half the points of each point set.
\end{abstract}

\SODAVer{%
   \thispagestyle{empty}%
   \newpage%
   \setcounter{page}{1}%
}

\section{Introduction}

A basic problem in algorithms is to partition data effectively
in order to apply divide and conquer algorithms, or just
store the data or manipulate it efficiently in distributed or parallel
fashion. In the context of Computational Geometry, such tasks are
usually achieved using cuttings \cite{cf-dvrsi-90}, partitions
\cite{m-ept-92}, or even hashing \cite{ai-nohaa-08}. More recently,
there was significant progress \cite{s-pmit-22} on using polynomials
to perform such partitions (e.g., polynomial ham-sandwich theorem) to
derive better combinatorial bounds (and in some cases, algorithms).

Partitioning a point set $\P \subset \Re^2$ via polynomials is quite
powerful, as such partitions can have many desirable properties not
achievable by the other techniques. However, while computing the
partitioning polynomial can be done efficiently \cite{s-pmit-22},
using such partitions algorithmically is challenging. As a concrete
example, consider a two dimensional polynomial in the plane $p(x,y)$
used to partition a set of points $\P$. It partitions the plane into
cells via its zero set $Z = \Set{(x,y) }{p(x,y)=0}$ -- that is, every
connected component $C$ of $\Re^2 \setminus Z$ induces a cluster in
the partition of $\P$ (i.e., $C \cap \P$). However, computing these
clusters is not algorithmically easy (or convenient) as dealing with
roots of high degree polynomials is cumbersome and computationally
slow. If one remembers how the polynomial $p$ was computed, in some
cases, such tasks become easier -- however, other tasks like adjusting
the partition when the underlying point set changes remains a
challenge, as multi-variable high-degree polynomials are unwieldy.

\subsection{Problem I: Separating multiple point sets by lines/planes}

For a set $\P$ of $n$ points in $\Re^d$, a set $\L$ of (hyper)planes
\emphi{separates} $\P$, if for any pair of points of
$\pa, \pb \in \P$, there is a plane in $\L$ that intersects the
interior of the segment $\pa \pb$ (which also does not contain $\pa$
or $\pb$). In the plane $\L$ is a set of lines.  The
\emphi{separability} of $\P$, denoted by $S_n = \indexX{\P}$, is the
size of the smallest set of lines that separates $\P$. The
separability of a point set captures how grid-like the point set
is. In particular, the separability of the $\sqrt{n}\times\sqrt{n}$
grid is $2\sqrt{n} - 2$, while for $n$ points in convex position the
separability is $\ceil{n/2}$ (and this is the worst case assuming
general position).

This problem can be stated as a hitting set problem (i.e., pick a
minimal size set of planes that hits all the segments formed by pairs
of points). The standard greedy algorithm yields a $O( \log n)$
approximation, and at least in the plane, it can be sped up by using
data-structures \cite{hj-ospl-20}. Somewhat surprisingly, the
separability of random points (picked inside a unit square) is
(roughly) $\Theta(n^{2/3})$, in contrast to grids where it is
$\Theta(\sqrt{n})$ \cite{hj-ospl-20}. Since the separability of $n$
points requires $\Omega(\sqrt{n})$ lines, an approximation quality of
$O( \log n)$ is somewhat more acceptable (although, whether this
approximation ratio can be improved in this case remains open).

\begin{figure}[t]
    \centerline{\includegraphics{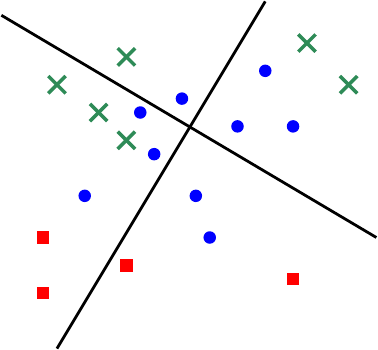}}%
    \caption{Given three point sets, suppose the goal is to break the
       green (cross) point set into sets with at most three points,
       the blue (dot) point set into sets with most four points, and
       the red (square) point set into sets with at most two
       points. This can be achieved using two separating lines.}
    \figlab{example}
\end{figure}

\paragraph*{Separating point sets by lines}

One can define murkier partition problems, such as partitioning
several point sets in a balanced way simultaneously.

\begin{problem}
    \problab{prob}%
    An instance $\Inst$ of the \emphi{reduction} problem is specified
    by $\numS$ point sets $\P_1, \ldots, \P_\numS \subset \Re^2$, not
    necessarily disjoint, and corresponding fractions
    $\fr_1, \ldots, \fr_\numS \in (0,1]$.  The \emphw{size} of $\Inst$
    is $\nP = \sum_i \cardin{\P_i}$.  The goal is to compute the
    smallest set of lines $\L$, such that for every cell $\cell$ in
    the arrangement $\ArrX{\L}$ of lines,
    \begin{math}
        \cardin{\P_i \cap \cell}%
        \leq \fr_i \cardin{\P_i},
    \end{math}
    for all $i$.  See \figref{example} for an example.  In the
    \emphi{halving problem}, we have that $\fr_i = 1/2$.
    Let $\opt$ denote the size of the optimal set $\L$.
\end{problem}
Observe that $\opt$ might be a small constant even if $\numS$ and $n$
are large.

\paragraph{Current solutions to the reduction problem.}
This problem can be reduced to several ``parallel'' instance of
partial set cover \cite{hj-fcmmp-18}, and this in problem can be
stated as a submodular optimization problem, which provides an
$O(\log n)$ approximation by the greedy algorithm. The basic idea is
to define a potential function which captures for every point how far
it is from being happily separated enough from the remaining
points. Then, the greedy algorithm chooses the line that its addition
to the partitioning set reduces this potential function the mos, see
\cite{hj-fcmmp-18} for details.

Using other techniques, Chekuri \etal \cite{ciqvz-acmsc-22} improved
the approximation to $O( \log \numS)$ (assuming that $\fr \geq
1/2$).

\paragraph{The challenge}
In light of the above, the interesting case of the reduction problem
is when the number of sets $\numS$ is polynomially large (e,g.,
$\numS= \sqrt{\nP}$), and the optimal solution $\opt$ is small (say, a
constant). Can one get a constant approximation in this case?

\subsection{Problem II: Guarding multiple point sets against convex
   regions}

The second ``dual'' problem is computing simultaneous weak nets for
several point sets.

\begin{problem}
    \problab{prob:guarding}%
    An instance $\Inst$ of the \emphi{guarding} problem is defined by
    $\numS$ point sets $\P_1, \ldots, \P_\numS \in \Re^d$ (not
    necessarily disjoint), and corresponding fractions
    $\fr_1, \ldots, \fr_\numS \in (0,1]$.  The \emph{size} of $\Inst$
    is $\nP = \sum_i |\P_i|$.  The goal is to find a minimum set of
    points $\Net$, such that every convex region $\region$ where
    $|\region \cap \P_i| \ge \fr_i \cardin{\P_i}$ has nonempty
    intersection with $\Net$.  Let $\opt$ denote the size of the
    optimal set $\Net$.
\end{problem}

This problem can also be viewed as a hitting set problem, where we
wish to hit all convex polygons that contain too many points of a
given point set $\P_i$ by at least one point.

\subsection{Background}
\paragraph*{Weak $\eps$-nets and guarding points from convex regions}
For a set $\P$ of $n$ points in $\Re^d$, a set $S \subset \Re^d$ is a
\emphi{weak $\eps$-net} if for every convex region $\region$ where
$|\region \cap P| \ge \eps n$ has nonempty intersection with $S$. We
can view $S$ as a set of guards in the plane that protects the points
against any convex set which contains many points.  The goal is to
pick a discrete point set where $|S|$ is as small as possible.  This
problem is well studied, see \cite{bfl-nhp-90,abfk-pswnch-92,
   ceggsw-ibwen-95,mw-ncwen-04}.  The state of the art is the recent
results by Rubin \cite{r-ibwen-18,r-sbwenhd-21} showing the existence
of weak $\eps$-nets of size $O_d(\eps^{-(d - 0.5 + \alpha)})$ for
arbitrarily small $\alpha > 0$. For more detailed history of the
problem, see the introduction of Rubin
\cite{r-ibwen-18,r-sbwenhd-21}. As for a lower bound, Bukh \etal
\cite{bmn-lbwensc-09} gave constructions of point sets for which any
weak $\eps$-net must have size
$\Omega_d(\eps^{-1} \log^{d-1} \eps^{-1})$.  Closing this gap remains
a major open problem.  See \cite{mv-eaen-17} for a recent survey of
$\eps$-nets and related concepts.

\bigskip%
\noindent%
\textbf{Round-and-cut.}\footnote{Not to be confused with
   \emph{cut-and-run} or probably the more correct name for this
   technique ``\emph{cut as long as you can not round}''.} %
Many approximation algorithm works by rounding a fractional optimal
solution to an associated \LP. Sometime the \LP is implicit, and it
can be solved using the ellipsoid algorithm via a separation
oracle. In particular, it is well known that \LP can be solved in
(weakly) polynomial time by such an algorithm. At every step, the \LP
solver asks the separation oracle about the status of a specific
solution/point. The oracle either finds a violated constraint and
returns it, or alternatively returns that the given query point is
feasible. Once a solution to the \LP is found, the approximation
algorithm rounds the \LP to get (hopefully) a good approximation.

In the \emphi{round-and-cut} approach \cite{cflp-sigcn-00}, one
combines the two steps. Given a query point (i.e., a fractional
assignment), the oracle either returns a violated constraint (if one
such constraint is easy to find), or tries to round this fractional
value. If the rounding is successful -- a good approximation was
found. Otherwise, the failure of the rounding provides a violated
constraint which is returned by the separation oracle. This is
especially useful where we do not know how to implement the standard
separation oracle, or the standard separation oracle requires
exponential time.

A variant of the round-and-cut technique was used (implicitly) in
computational geometry. The multiplicative weight update (\MWU)
algorithms can be viewed as solving an \LP. In particular, Clarkson's
algorithm \cite{c-apca-93} for set cover/hitting set (see also
\cite{bg-aoscf-95}), work by assigning weights to points (i.e., think
about these as the \LP values assigned to the points), and each stage
computing an $\eps$-net (the value of $\eps$ is guessed in
advance). Either a small $\eps$-net is found, or alternatively a
multiplicative weight update is applied (i.e., the values of the \LP
are adjusted). This connection between these \MWU algorithms and \LP
is discussed by Har-Peled \cite[Chapter 6]{h-gaa-11}.

\subsection{Our results}

We provide polynomial time approximation algorithms for both
problems. The reduction problem is solved by studying a fractional
version of the line separation problem. We solve the later problem
using the round-and-cut framework, where the rounding procedure
(essentially) requires a solution to a new problem, which is
``dual'' to the weak $\eps$-net problem.
Specifically, given a set of lines $\L$,
one need to find a minimum number of lines that intersect all convex
regions intersecting more than $\eps$-fraction of the lines of
$\L$. We refer to this problem as the \emphi{weak $\eps$-cutting}
problem.  This problem has similar flavor to the weak $\eps$-net
problem.

Surprisingly, unlike for the weak $\eps$-net problem, this problem has
a direct $O(1/\eps^2)$ solution.  Indeed, setting $r=1/\eps$, one can
compute a $1/r$-cutting of $\L$. This decomposes $\Re^d$ into $O(r^d)$
simplices, so that each one intersects at most $\nL/r$ lines (i.e.,
planes or hyperplanes in higher dimensions), where $\nL = |\L|$. In
2d, adding the lines supporting the edges of the triangles readily
yields a weak $\eps$-cutting with $O(r^2)$ lines.  This solution also
works in higher dimensions, yielding a weak $\eps$-cutting of size
$O(1/\eps^d)$ in $d$ dimensions.

\paragraph{Smaller weak cuttings in 2d}

In two dimensions we show how to efficiently construct weak
$\eps$-cutting of size $\tldO(1/\eps^{3/2})$.  The construction
requires over-sampling together with a refinement of larger faces into
``large'' polygons, and using known combinatorial bounds on the
complexity of many faces.  See \thmref{w:cutting:2:d} for details.

\paragraph{Weak $\eps$-net for corridors}

In the dual, the above problem becomes the following -- given a set
$\P$ of $\nP$ points in the plane, compute a set $\Net$ of points (not
in $\P$), such that any corridor containing $\eps \nP$ points of $\P$
must contain a point of $\Net$, where a \emph{corridor} is the region
bound between the upper and lower envelopes of any set of lines. That
is, this is the problem of computing \emphw{weak $\eps$-net for
   corridors}. The above constructions readily implies a weak
$\eps$-net of size $\tldO(1/\eps^{3/2})$.

These two problems were not studied before, and we consider this
result (and its primal) to be quite surprising.

\paragraph{Approximation to the reduction problem}

We transform the reduction problem (using lines for separation)
to an implicit hitting set problem. We solve the \LP relaxation of the
later problem, by repeatedly using the weak cutting construction
algorithm above to perform rounding, and find a violated constraint if
such a constraint exists. This replaces the original constraints
involving multiple sets, into ``monochromatic'' constraints. This
yields a set of $O(\opt^{3/2} \log^{3/2} \opt)$ lines that performs
the desired separation, where $\opt$ is the size of the optimal
solution, see \thmref{red:approx} for details. Thus, when $\opt$ is a
constant, our algorithm is the first constant approximation algorithm
for this problem.

Interestingly, to get a fast algorithm, we show that one can reduce
the number of lines under consideration. In particular, we show that
instead of the $O(\nP^2)$ lines, one can quickly generate a candidate
set of lines of size $\tldO\bigl( (\numS/\fr)^2 \bigr)$, such that it
contains a constant approximation to the optimal solution, see
\lemref{lines:r} for details (here $\fr$ is the minimum fraction of
separation required of any set in the original instance). Using this
as a preprocessing stage, yields a near linear time approximation
algorithm for the reduction problem.

\paragraph{Approximation algorithm for the convex guarding problem}

The same approach works for the convex guarding problem, except that
the rounding now is done via the ``standard'' weak $\eps$-net
construction. Furthermore, the separation oracle requires finding a
bad convex polygon given a suggested net, which is done via dynamic
programming, which might be of independent interest (see
\lemref{bad:polygon}). In this case, we do not have a way to reduce
the candidate set of points being used as part of the net, and thus
the running time is worse (i.e., polynomial). We get a
$\tldO(\sqrt{\opt})$ approximation in polynomial time, see
\thmref{g:convex:polygons}
for details.

\paragraph*{Paper organization}

We start at \secref{prelims} with some standard background.
In \secref{weak:e:net} we present the construction of weak
$\eps$-cutting for lines, which in the dual is weak $\eps$-net for
corridors. We present the approximation algorithm for the reduction
problem in \secref{red:problem}.  In \secref{g:convex} we present the
approximation algorithm for the convex guarding problem.

\section{Preliminaries}
\seclab{prelims}

\subsection{Notations}

For an integer $n$, let $\IRX{n} = \{1, \ldots, n\}$. In the
following, \emphi{plane} denotes a flat of dimension $d-1$ contained
in $\Re^d$.

\paragraph{Duality}
A plane $\hp \equiv x_d = b_1 x_1 + \cdots + b_{d-1} x_{d-1} + b_d$ in
$\Re^d$ can be interpreted as a function from $\Re^{d-1}$ to
$\Re$. Given a point $\p = (\p_1, \ldots, \p_d)$, let
$\hp(\p) = b_1 \p_1 + \cdots + b_{d-1} \p_{d-1} + b_d$. Thus, a point
$\p$ lies \emph{above} the plane $\hp$ if $\p_d > \hp( \p )$.  As
such, a point lies \emph{on} the plane $\hp$ if $\hp(\p) = \p_d$. The
\emphi{duality} between points and planes is defined as
\begin{equation*}
    \begin{aligned}
  \p= (\p_1,\ldots, \p_d)%
  &\quad \implies \quad%
    \dualX{\p} \equiv x_d = \p_1 x_1 + \cdots+
    \p_{d-1}x_{d-1}
    - \p_d\\
  \hp \equiv x_d = a_1 x_1 + \cdots +
  a_{d-1}x_{d-1} + a_d %
  &\quad \implies \quad%
    \dualX{\hp} = (a_1, \ldots, a_{d-1}, -a_d).
\end{aligned}
\end{equation*}

The following is well known \cite{h-gaa-11}.
\begin{lemma}
    \lemlab{duality:extent}%
    For a point $\p = (b_1, \ldots, b_d)$, we have the following:
    \begin{compactenumA}
        \item $\dualDualX{\p} = \p$.

        \item A point $\p$ lies above/below/on the plane $\hp$
        $\Longleftrightarrow$ %
        \NotSODAVer{the } point $\dualX{\hp}$ lies above/below/on the
        plane $\dualX{\p}$.

        \item The vertical distance between $\p$ and $\hp$ is the same
        as that between $\dualX{\p}$ and $\dualX{\hp}$.

        \item The distance between two parallel planes $\hp$ and
        $\hpA$ is the length of the vertical segment
        $\dualX{\hp}\dualX{\hpA}$.

    \end{compactenumA}
\end{lemma}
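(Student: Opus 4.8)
The plan is to verify the four claimed identities of \lemref{duality:extent} directly from the algebraic definition of the duality map, since each one reduces to a short computation once the right quantities are written down.

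\textbf{Part (A): the involution property.} I would start from $\p = (b_1, \ldots, b_d)$, so that $\dualX{\p}$ is the plane $x_d = b_1 x_1 + \cdots + b_{d-1} x_{d-1} - b_d$, i.e.\ the plane with coefficient vector $(b_1, \ldots, b_{d-1}, -b_d)$ in the $\hp \equiv x_d = a_1 x_1 + \cdots + a_{d-1} x_{d-1} + a_d$ normal form. Applying the plane-to-point half of the duality map to this (with $a_i = b_i$ for $i < d$ and $a_d = -b_d$) gives the point $(b_1, \ldots, b_{d-1}, -(-b_d)) = (b_1, \ldots, b_d) = \p$. So $\dualDualX{\p} = \p$, and by symmetry of the formulas the same holds starting from a plane.

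\textbf{Parts (B) and (C): the vertical-distance identity, with sign.} The cleanest route is to prove (C) first and get (B) as its corollary. For a point $\p = (\p_1, \ldots, \p_d)$ and a plane $\hp \equiv x_d = a_1 x_1 + \cdots + a_{d-1} x_{d-1} + a_d$, the signed vertical distance from $\p$ to $\hp$ is $\p_d - \hp(\p) = \p_d - \bigl(a_1 \p_1 + \cdots + a_{d-1}\p_{d-1} + a_d\bigr)$. Now compute the corresponding signed vertical distance between $\dualX{\hp} = (a_1, \ldots, a_{d-1}, -a_d)$ and the plane $\dualX{\p} \equiv x_d = \p_1 x_1 + \cdots + \p_{d-1} x_{d-1} - \p_d$: it is $(-a_d) - \bigl(\p_1 a_1 + \cdots + \p_{d-1} a_{d-1} - \p_d\bigr) = \p_d - \bigl(a_1\p_1 + \cdots + a_{d-1}\p_{d-1} + a_d\bigr)$, which is exactly the same expression. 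Hence the vertical distances agree, proving (C); taking absolute values is unnecessary since even the signed quantities coincide. For (B): $\p$ lies above/on/below $\hp$ according to whether $\p_d - \hp(\p)$ is positive/zero/negative, and $\dualX{\hp}$ lies above/on/below $\dualX{\p}$ according to the sign of the second expression above; since the two expressions are literally equal, the above/on/below relations match. (One should double-check the orientation convention: the paper defines ``$\p$ above $\hp$'' as $\p_d > \hp(\p)$, so the signed distance $\p_d - \hp(\p)$ is the right object, and the signs are consistent.)

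\textbf{Part (D): distance between parallel planes.} Two planes in $\Re^d$ are parallel iff they have the same $(a_1, \ldots, a_{d-1})$ part, say $\hp \equiv x_d = \sum a_i x_i + a_d$ and $\hpA \equiv x_d = \sum a_i x_i + a_d'$. Then $\dualX{\hp} = (a_1, \ldots, a_{d-1}, -a_d)$ and $\dualX{\hpA} = (a_1, \ldots, a_{d-1}, -a_d')$, so the segment $\dualX{\hp}\dualX{\hpA}$ is vertical (the first $d-1$ coordinates agree) and has length $|a_d - a_d'|$. On the other hand, the two parallel planes differ by a vertical translation of magnitude $|a_d - a_d'|$, and for parallel hyperplanes the vertical gap equals the (Euclidean) distance divided by the common cosine factor $1/\sqrt{1 + \sum a_i^2}$... so more precisely, the \emph{vertical} distance between $\hp$ and $\hpA$ is $|a_d - a_d'|$, matching the length of $\dualX{\hp}\dualX{\hpA}$; this is really just the special case of (C) where the roles are played consistently, and if the intended reading of (D) is Euclidean distance one multiplies both sides by the same normalization and the identity still holds. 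I would state (D) in terms of vertical distance for cleanliness and note the Euclidean version follows by the same normalization on both sides.

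\textbf{Main obstacle.} There is no real difficulty here — the only thing to be careful about is bookkeeping of signs and the ``above/below'' orientation convention, making sure that the $-\p_d$ in $\dualX{\p}$ and the $-a_d$ in $\dualX{\hp}$ are handled so that the signed vertical distance is genuinely preserved (not just its absolute value), which is what makes (B) fall out for free. A secondary point is being explicit about the definition of ``distance'' in (D) (vertical vs.\ Euclidean) so the statement is unambiguous.
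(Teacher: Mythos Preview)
The paper does not actually prove this lemma: it is introduced with ``The following is well known \cite{h-gaa-11}'' and no argument is given. Your direct algebraic verification is the standard one and is correct for (A)--(C); in particular, computing the \emph{signed} vertical distance and observing that it is literally the same expression on both sides is exactly the right way to get (B) and (C) simultaneously.

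One correction in (D). Your parenthetical ``if the intended reading of (D) is Euclidean distance one multiplies both sides by the same normalization and the identity still holds'' is not correct. The length of the vertical segment $\dualX{\hp}\,\dualX{\hpA}$ in the dual is simply $|a_d - a_d'|$; there is nothing to normalize on that side. The Euclidean distance between the two primal parallel planes, on the other hand, is $|a_d - a_d'|/\sqrt{1+\sum_i a_i^2}$, which is strictly smaller unless all $a_i=0$. So (D) is true only under the vertical-distance reading (which is the reading consistent with (C) and with how the paper uses duality). Your instinct to state (D) in terms of vertical distance is right; just drop the claim that the Euclidean version also follows.
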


\subsection{The reduction problem as a hitting set problem }

\probref{prob} specifies the given instance.  A convex region
$\region$, is \emphi{bad} for a set of lines $\L$ if $\region$ does
not intersect any line of $\L$, and there is an index $i$, such that
$|\P_i \cap \region| > \fr_i |\P_i|$.  In particular, \probref{prob}
can be interpreted as computing a minimal set of lines that intersects
the interior of all the bad regions. That is, this problem can be
stated as a hitting set problem. Since the family of all convex
regions induced by a set of all allowable lines $\L'$ has an
exponential size in $\cardin{\L'}$, it is not possible to compute this
family of regions explicitly in polynomial time or explicitly state
the \LP associated with this problem (since it has exponential size).

\subsection{Corridors}

Given a plane (i.e., a line in two dimensions)
\begin{math}
    \hp \equiv x_d = \sum_{i=1}^{d-1} \lambda_i x_i + \lambda_d,
\end{math}
and a non-zero real number $\beta \in \Re$, let $\alpha \otimes \hp$
be the \emphi{scaled} plane
\begin{math}
    \alpha \otimes \hp \equiv x_d = \sum_{i=1}^{d-1} \alpha \lambda_i
    x_i + \alpha \lambda_d,
\end{math}
Similarly, given a second plane
\begin{math}
    \hpA \equiv x_d = \sum_{i=1}^{d-1} \lambda'_i x_i + \lambda'_d,
\end{math}
let their \emphi{sum} be the plane
\begin{math}
    \hp \oplus \hpA \equiv x_d = \sum_{i=1}^{d-1} (\lambda_i +
    \lambda'_i) x_i + (\lambda_d + \lambda'_d),
\end{math}
Thus, planes form a vector space with scalar multiplication $\otimes$,
and vector addition $\oplus$ (this is an immediate consequence of
``importing'' the corresponding operations from the dual space).

\newcommand{\hullX}[1]{\Mh{\mathcal{H}}\pth{#1}}

Thus, given two planes $\hp_1, \hp_2$, their \emphi{convex
   combination}, for $t \in [0,1]$, is the plane
\begin{equation*}
    \hp(t) = \bigl((1-t) \otimes \hp_1\bigr) \oplus \bigl(t\otimes \hp_2\bigr).
\end{equation*}
For two lines in the plane, the set $\Set{\hp(t)}{t\in [0,1]}$ is the
set of all lines in the double wedge between $\hp_1$ and $\hp_2$
passing through the intersection point $\hp_1 \cap \hp_2$.  More
generally, given $\alpha_1, \ldots, \alpha_m \in [0,1]$, with
$\sum_i \alpha_i =1$, and planes $\hp_1, \ldots, \hp_m$, they define
the \emphi{convex combination}
$(\alpha_1 \otimes \hp_1) \oplus\cdots \oplus (\alpha_m \otimes
\hp_m)$. Given a set of planes $\L$, their \emphi{hull}, denoted by
$\hullX{\L}$, is the set of all their convex-combinations.

A set of planes $\L$ is \emphi{convex} if $\L = \hullX{\L}$.  For a
set of planes $\L$, its \emphi{corridor}
$\corrX{\L} = \cup\hullX{\L} = \displaystyle\cup^{}_{\hp \in
   \hullX{\L}}\, \hp$ is the union of planes in $\hullX{\L}$ --
geometrically, it is the region bounded by the upper and lower
envelopes of $\L$. A set of planes $\L$ is \emphi{bounded}, if all the
coefficients used by planes in $\L$ are bounded. The closure of a
bounded convex set of planes does not contain vertical planes. A
region $R \subseteq \Re^d$ is a \emphi{corridor} if there is a set of
planes $\L$, such that $R = \corrX{\L}$.

\begin{lemma}
    \lemlab{corridor:w}%
    Let $\L$ be a finite (bounded) set of planes in $\Re^d$.  We have
    the following:
    \begin{compactenumA}
        \item The dual of $\CR = \corrX{\L}$ is a convex polytope
        $\dualX{\CR} = \dualX{\pth{\corrX{\L}}} = \CHX{\dualX{\L}}$.

        \item A point $\pa \in \corrX{\L}$ $\IFF$ the plane
        $\dualX{\pa}$ intersects $\CHX{\dualX{\L}}$.

        \item A plane $\hp \subseteq \corrX{\L}$ $\IFF$ the point
        $\dualX{\hp} \in \CHX{\dualX{\L}}$.

        \item A plane $\hp \subseteq \corrX{\L}$ $\implies$
        $\exists \H = \{ \hpA_1, \ldots, \hpA_{d+1 } \} \subseteq \L$,
        such that $\hp \subseteq \corrX{ \H }$. Furthermore, the set
        $\H$ can be computed in $O(n)$ time.
    \end{compactenumA}
\end{lemma}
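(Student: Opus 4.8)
The plan rests on one observation about the duality: on coefficient vectors the map $\hp\mapsto\dualX{\hp}$ is an affine bijection from the set of non-vertical planes in $\Re^d$ onto $\Re^d$, and it turns the plane-space operations $\otimes$ and $\oplus$ into ordinary scalar multiplication and addition on $\Re^d$ (this is exactly how $\otimes,\oplus$ were defined, by ``importing'' them from the dual). Hence it turns convex combinations into convex combinations: $\dualX{\bigl((\alpha_1\otimes\hp_1)\oplus\cdots\oplus(\alpha_m\otimes\hp_m)\bigr)}=\sum_i\alpha_i\,\dualX{\hp_i}$ whenever $\alpha_i\ge 0$ and $\sum_i\alpha_i=1$. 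Taking the union over all such combinations gives $\dualX{\hullX{\L}}=\CHX{\dualX{\L}}$; since $\L$ is finite, $\dualX{\L}$ is a finite point set and $\CHX{\dualX{\L}}$ is a convex polytope, and boundedness of $\L$ keeps every plane under consideration non-vertical, so duality applies to all of them. Because $\hp\mapsto\dualX{\hp}$ is a bijection, $\hullX{\L}$ is precisely the set of planes $\hp$ with $\dualX{\hp}\in\CHX{\dualX{\L}}$. Reading $\dualX{\CR}$ as $\Set{\dualX{\hp}}{\hp\text{ a plane},\ \hp\subseteq\CR}$, part~(A) will then drop out of part~(C), so I would prove (B), (C), (D) and obtain (A) as a corollary.

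For part~(B): by definition $\pa\in\CR=\corrX{\L}$ iff $\pa$ lies on some plane $\hp\in\hullX{\L}$. By \lemref{duality:extent}(B), $\pa$ lies on $\hp$ iff the point $\dualX{\hp}$ lies on the plane $\dualX{\pa}$. Combining with the previous paragraph, $\pa\in\corrX{\L}$ iff there is a point of $\CHX{\dualX{\L}}$ (namely $\dualX{\hp}$) lying on the plane $\dualX{\pa}$, i.e.\ iff $\dualX{\pa}$ intersects $\CHX{\dualX{\L}}$.

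For part~(C): the direction ``$\Leftarrow$'' is immediate, since $\dualX{\hp}\in\CHX{\dualX{\L}}$ forces $\hp\in\hullX{\L}$, hence $\hp\subseteq\corrX{\L}$ by the definition of the corridor. For ``$\Rightarrow$'' I argue by contraposition: if $\dualX{\hp}\notin\CHX{\dualX{\L}}$, then since $\CHX{\dualX{\L}}$ is a compact convex polytope, a separating hyperplane translated to pass through $\dualX{\hp}$ avoids it; as the non-vertical hyperplanes through a fixed point are dense and ``avoiding a compact set'' is an open condition, this hyperplane may be taken non-vertical, so it equals $\dualX{\pa}$ for some point $\pa$, and by \lemref{duality:extent}(B) this $\pa$ lies on $\hp$. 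By part~(B), $\pa\notin\corrX{\L}$, so $\hp\not\subseteq\corrX{\L}$. Part~(A) now follows: $\dualX{\hp}\in\dualX{\CR}$ iff $\hp\subseteq\CR$ iff $\dualX{\hp}\in\CHX{\dualX{\L}}$, and every point of $\Re^d$ is $\dualX{\hp}$ for a unique non-vertical plane $\hp$, so $\dualX{\CR}=\CHX{\dualX{\L}}$.

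For part~(D): if $\hp\subseteq\corrX{\L}$ then $\dualX{\hp}\in\CHX{\dualX{\L}}$ by part~(C), and by the theorem of \Caratheodory in $\Re^d$ the point $\dualX{\hp}$ is a convex combination of at most $d+1$ of the points of $\dualX{\L}$, say $\dualX{\hpA_1},\ldots,\dualX{\hpA_{d+1}}$ (pad with repeats if fewer are needed). Then $\dualX{\hp}\in\CHX{\dualX{\H}}$ for $\H=\{\hpA_1,\ldots,\hpA_{d+1}\}\subseteq\L$, and applying part~(C) to $\H$ yields $\hp\subseteq\corrX{\H}$. To produce $\H$ in $O(n)$ time, observe that a choice of nonnegative weights $\alpha_1,\ldots,\alpha_n$ with $\sum_i\alpha_i=1$ and $\sum_i\alpha_i\,\dualX{\hpA_i}=\dualX{\hp}$ is a feasible solution of a linear program with $n$ nonnegative variables and $d+1$ equality constraints; a basic feasible solution has at most $d+1$ nonzero weights, and such a solution is found in $O(n)$ time by a fixed-dimension linear-programming algorithm. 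The delicate point throughout is the verticality bookkeeping in part~(C) -- arranging that the separating hyperplane is non-vertical so duality can be applied to it, which is exactly where finiteness and boundedness of $\L$ are used -- together with the $O(n)$ claim in part~(D), which leans on fixed-dimension linear programming rather than a naive iterated-\Caratheodory reduction.
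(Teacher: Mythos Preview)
Your proof is correct and follows the paper's approach exactly: duality for parts (A)--(C), and \Caratheodory's theorem together with fixed-dimension linear programming for part (D). The paper dispatches (A), (B), (C) in a single line as ``immediate implications of duality,'' so your separating-hyperplane argument for the nontrivial direction of (C), the attendant verticality bookkeeping, and the derivation of (A) from (C) are details you have correctly supplied that the paper simply omits.
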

\begin{proof}
    (A), (B) and (C) are immediate implications of duality.

    (D) Indeed, we have $\hp \subseteq \corrX{\L}$ $\IFF$
    $\dualX{\hp}\in \CHX{\dualX{\L}}$. By \Caratheodory's theorem,
    there is a set of $d+1$ points
    $\dualX{\H} = \{ \dualX{\hpA_1}, \ldots, \dualX{\hpA_{d+1}} \}
    \subseteq \dualX{\L}$ such that
    $\dualX{\hp} \in \CHX{\dualX{\H}}$. This set can be computed in
    linear time using low dimensional linear programming
    \cite{h-gaa-11}. As such, for $\H = \dualDualX{\H}$, we have that
    $\hp = \dualDualX{\hp} \subseteq \corrX{\H}$.
\end{proof}

\begin{defn}
    For a set $\P$ of $\nP$ points in the plane, and a parameter
    $\eps \in (0,1)$, a \emphi{weak $\eps$-net for corridors} is a set
    of points $\Net \subseteq \Re^2$, such that for any corridor $\CR$
    in the plane, that contains at least $\eps \nP$ points of $\P$,
    must contain at least one point of $\Net$.
\end{defn}

\begin{figure}[t]
    \centerline{\includegraphics{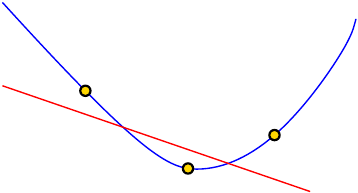}}%
    \caption{%
       Including a point in the corridor by choosing a line separating
       it from remaining points.  }
     \figlab{shattering_corridor}
\end{figure}

The following implies that ``strong'' versions of weak $\eps$-nets for
corridors, where $\Net$ is restricted to be a subset of $\P$, must
contain a large number of points.

\begin{lemma}
    Let $\eps \in (0,1)$, and let $\nP$ be any positive integer number.
    There is a set $\P$ of $\nP$ points in the plane, such that any
    weak $\eps$-net for corridors that is restricted to be a subset of
    $\P$ must be of size $\geq \nP-\ceil{\eps \nP} +1$.
\end{lemma}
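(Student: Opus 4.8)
The plan is to realize $\P$ as $\nP$ points on the parabola $y=x^2$, show that corridors are rich enough to capture \emph{any} subset of $\P$ while avoiding the complement, and then finish with a counting argument. Concretely, take $\P = \Set{\p_i = (i,i^2)}{i \in \IRX{\nP}}$.

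The heart of the argument is the following claim: for every subset $Q \subseteq \P$ there is a corridor $\CR$ with $Q \subseteq \CR$ and $\CR \cap (\P \setminus Q) = \emptyset$. To build $\CR$, for each $q = (a,a^2) \in Q$ let $\ell_q$ be the tangent line to the parabola at $q$, that is $\ell_q \equiv y = 2ax - a^2$. For any $j \neq a$ we have $j^2 - (2aj - a^2) = (j-a)^2 > 0$, so every point of $\P \setminus \{q\}$ lies strictly above $\ell_q$. Now set $\L = \Set{\ell_q}{q \in Q}$ and $\CR = \corrX{\L}$. Each $q \in Q$ lies on $\ell_q \in \hullX{\L}$, hence $q \in \CR$, so $Q \subseteq \CR$. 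Conversely, any $p \in \P \setminus Q$ lies strictly above $\ell_q$ for \emph{every} $q \in Q$, hence strictly above the upper envelope of $\L$ (the pointwise maximum of the lines $\ell_q$), and therefore $p \notin \CR$. This establishes the claim. (Note $\L$ is finite and bounded, so $\CR$ is a legitimate corridor; when $\cardin{Q}=1$ the corridor is a single line.)

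With the claim in hand, the counting step is short. Let $\Net \subseteq \P$ be any weak $\eps$-net for corridors, and suppose toward a contradiction that $\cardin{\Net} \le \nP - \ceil{\eps \nP}$. Then $\cardin{\P \setminus \Net} \ge \ceil{\eps \nP}$, so we may choose $Q \subseteq \P \setminus \Net$ with $\cardin{Q} = \ceil{\eps \nP} \ge \eps \nP$. Applying the claim gives a corridor $\CR$ with $Q \subseteq \CR$ and $\CR \cap (\P \setminus Q) = \emptyset$; since $\Net \subseteq \P \setminus Q$, this forces $\CR \cap \Net = \emptyset$, while $\cardin{\CR \cap \P} \ge \cardin{Q} \ge \eps \nP$. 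This contradicts $\Net$ being a weak $\eps$-net for corridors. Hence $\cardin{\Net} \ge \nP - \ceil{\eps \nP} + 1$, as desired.

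The only place requiring care is the middle step, and specifically the observation that a point lying strictly above every line $\ell_q$ lies strictly above the pointwise maximum of those lines and is therefore outside the region between the lower and upper envelopes of $\L$; everything else is bookkeeping, and no quantitative estimates are involved. One could equally run the dual argument via \lemref{corridor:w} (the tangent lines dualize to points on a parabola, and strict separation of $p$ from the corridor becomes the dual line $\dualX{p}$ missing the convex hull of those points), but the primal description above seems the cleanest to write out.
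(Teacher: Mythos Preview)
Your proof is correct and follows essentially the same approach as the paper: place $\P$ on the parabola $y=x^2$, show that every subset of $\P$ arises as $\CR\cap\P$ for some corridor $\CR$, and finish with the obvious counting. The only difference is in how the corridor is built: the paper takes, for each $p\in X$, a separating line just above $p$ together with one additional line $\ell$ below all of $\P$, whereas you use the tangent lines directly and avoid the extra line. Your variant is a touch cleaner, but the underlying idea is identical.
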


\begin{proof}
    Let $f(x) = x^2$ (or any other convex smooth function).
    Let $\P = \Set{ (i,f(i)) }{ i \in \IRX{\nP}\bigr.}$,
    and let $\ell$ be any line lying below $\P$.

    Consider any set $X \subseteq \P$.  For any point $\p \in X$,
    introduce a non-vertical line lying above $p$ that seperates $\p$
    from the remaining points of $\P$. Clearly, for the resulting set of lines $\L_X$, we
    have $\corrX{ \L_X \cup \{\ell\}} \cap \P = X$. Thus, any $\eps$-net
    $Y \subseteq \P$, such that $|Y| < \nP - \ceil{\eps \nP} +1$ does
    not stab the complement set $X = \P \setminus X$, which is both
    $\eps$-heavy and realizable as a corridor, as
    $\corrX{\L_X \cup \{\ell\}} \cap \P = X$, a contradiction.
\end{proof}

\subsection{$\eps$-net theorem}

\begin{defn}
    A set $\Net \subseteq \GroundSet$ is an \emphi{$\eps$-net} for
    $\GroundSet$ if for any range $\range \in \RangeSet$, if
    $\cardin{\range \cap \GroundSet} \geq \eps \cardin{\GroundSet}$,
    then $\range$ contains at least one point of $\Net$ (i.e.,
    $\range \cap \Net \ne \emptyset$).
\end{defn}

\begin{theorem}[$\eps$-net theorem, \cite{hw-ensrq-87}]
    \thmlab{epsilon:net}%
    Let $(\GroundSet,\RangeSet)$ be a range space of \VC dimension
    $\Dim$, and suppose that $0 < \eps \leq 1$ and $\BadProb < 1$. Let
    $\Net$ be a set obtained by
    \begin{math}
        m%
        =%
        \Omega\pth{ \eps^{-1} (\log \BadProb^{-1} + \Dim \log
           \eps^{-1} )\bigr.}
    \end{math}
    random independent draws from $\GroundSet$.  Then $\Net$ is an
    $\eps$-net for $\GroundSet$ with probability at least
    $1-\BadProb$.
\end{theorem}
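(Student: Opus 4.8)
The plan is to carry out the classical double-sampling (symmetrization) argument of Haussler and Welzl. Call a range $\range\in\RangeSet$ \emph{heavy} if $\cardin{\range\cap\GroundSet}\ge\eps\cardin{\GroundSet}$, and let $E_1$ be the bad event that the random sample $\Net$ of size $m$ misses some heavy range entirely; the goal is to show $\Prob{E_1}\le\BadProb$. The first step is to introduce a second, independent ``ghost'' sample $T$ of the same size $m$, and to work instead with the event $E_2$ that there is a heavy range $\range$ with $\range\cap\Net=\emptyset$ but $\cardin{\range\cap T}\ge\eps m/2$. Passing to $E_2$ is what lets us reason about the finite point set $\Net\cup T$ rather than all of $\GroundSet$.

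The second step is the containment bound $\Prob{E_2}\ge\tfrac12\Prob{E_1}$. I would condition on $E_1$ together with a witnessing heavy range $\range$ chosen by a rule depending only on $\Net$, so that $T$ remains independent of that choice; then $\cardin{\range\cap T}$ is a Binomial variable with mean $\ge\eps m$, and a Chebyshev estimate gives $\Prob{\cardin{\range\cap T}<\eps m/2}\le 4/(\eps m)\le\tfrac12$ as soon as $m\ge 8/\eps$. This reduces the problem to bounding $\Prob{E_2}$ from above.

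The third step bounds $\Prob{E_2}$ by symmetrization: generate the $2m$ points of $\Net\cup T$ first, then split them into $\Net$ and $T$ via a uniformly random balanced partition, and condition on the resulting (unordered) $2m$-point set $Z$. By the Sauer--Shelah lemma, the number of distinct traces $\range\cap Z$ over $\range\in\RangeSet$ is at most $\sum_{i=0}^{\Dim}\binom{2m}{i}=O\pth{(2m)^\Dim}$, so it suffices to union-bound over these. For a fixed $\range$ that can contribute to $E_2$ we have $k:=\cardin{\range\cap Z}=\cardin{\range\cap T}\ge\eps m/2$, and the probability over the random split that all $k$ of these points fall in $T$ is $\binom{2m-k}{m}\big/\binom{2m}{m}\le 2^{-k}\le 2^{-\eps m/2}$. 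Hence $\Prob{E_2}=O\pth{(2m)^\Dim\,2^{-\eps m/2}}$, and combining with the previous step, $\Prob{E_1}=O\pth{(2m)^\Dim\,2^{-\eps m/2}}$ as well.

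Finally, a routine calculation shows that taking $m=\Omega\pth{\eps^{-1}\pth{\log\BadProb^{-1}+\Dim\log\eps^{-1}}}$ with a sufficiently large constant forces this bound below $\BadProb$: the $\Dim\log(2m)$ term in the exponent is absorbed once $m$ exceeds a fixed constant times $\eps^{-1}\Dim\log\eps^{-1}$, after which $\eps m/2$ dominates and swallows $\log\BadProb^{-1}$ as well. The step I expect to be the real crux is the containment inequality $\Prob{E_2}\ge\tfrac12\Prob{E_1}$: one has to make sure the witness range is selected as a function of $\Net$ only (otherwise independence of $T$ is destroyed), and it is exactly here that the hypothesis $m=\Omega(1/\eps)$ first gets used. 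The Sauer--Shelah step is standard given finite \VC dimension, and the random-split estimate together with the final parameter bookkeeping is mechanical.
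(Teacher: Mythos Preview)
The paper does not give its own proof of this statement: it is quoted as the classical $\eps$-net theorem with a citation to Haussler and Welzl \cite{hw-ensrq-87}, and is used as a black box in the preliminaries. Your proposal is precisely the standard double-sampling (symmetrization) argument from that reference, and the outline is correct as written---the containment step $\Prob{E_2}\ge\tfrac12\Prob{E_1}$ via Chebyshev, the Sauer--Shelah bound on the number of traces on $Z$, and the $2^{-\eps m/2}$ estimate from the random balanced split are all sound, as is the final parameter calculation.
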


\section{Weak $\eps$-net for corridors}
\seclab{weak:e:net}

The input is a set $\L$ of $\nL$ lines in the plane, and a parameter
$\eps \in (0,1)$. Our purpose here is to compute a set $\Net$ of lines
in the plane, such that any convex body $\body$ in the plane such that
$|\body \sqcap \L| \geq \eps \nL$, we have that a line of $\Net$
intersects $\body$, where
\begin{equation*}
    \body \sqcap \L
    =%
    \Set{ \Line \in \L}{ \Line \cap \body \neq \emptyset}.
\end{equation*}
In the dual, this corresponds to the property that any corridor $\CR$
containing more than $\eps \nL$ points of $\dualX{\L}$, contains a
point of $\dualX{\Net}$. The set $\Net$ is a \emphi{weak
   $\eps$-cutting}, as every face of the arrangement $\ArrX{\Net}$
intersects at most $\eps \nL$ lines of $\L$. This definition extends
naturally to higher dimensions.

We start with an easy construction.

\begin{lemma}
    For a set $\L$ of $\nL$ planes in $\Re^d$, one can compute a weak
    $\eps$-cuttings size $O(1/\eps^d)$.
\end{lemma}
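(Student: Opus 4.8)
The plan is to use the standard cutting lemma as a black box and then simply add, for each cell of the cutting, the lines supporting its boundary faces. Recall that for a set $\L$ of $\nL$ hyperplanes in $\Re^d$ and a parameter $r$, a $(1/r)$-cutting is a partition of $\Re^d$ into $O(r^d)$ (open) simplices $\simplex_1,\ldots,\simplex_k$ such that each $\simplex_j$ is crossed by at most $\nL/r$ hyperplanes of $\L$; such a cutting can be computed in polynomial (indeed near-linear) time \cite{cf-dvrsi-90}. First I would set $r = \ceil{1/\eps}$ and invoke this lemma to obtain simplices $\simplex_1,\ldots,\simplex_k$ with $k = O(r^d) = O(1/\eps^d)$.

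Next, for each simplex $\simplex_j$, let $\HSet_j$ be the set of at most $d+1$ hyperplanes in $\Re^d$ that support the facets of $\simplex_j$, and put $\Net = \bigcup_{j=1}^k \HSet_j$. Then $|\Net| \le (d+1)k = O(1/\eps^d)$, as claimed (with $d$ treated as a constant). It remains to verify that $\Net$ is a weak $\eps$-cutting, i.e., that every face of $\ArrX{\Net}$ is crossed by at most $\eps\nL$ planes of $\L$. The key observation is that the arrangement $\ArrX{\Net}$ refines the cutting: every (open) face $\face$ of $\ArrX{\Net}$ avoids all hyperplanes of $\Net$, and in particular avoids all the facet-hyperplanes of every $\simplex_j$; hence $\face$ cannot cross the boundary of any $\simplex_j$, so $\face$ is contained in a single simplex $\simplex_{j}$ (or in the complement region, but the simplices cover $\Re^d$). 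Consequently any hyperplane of $\L$ meeting $\face$ also meets $\simplex_j$, so the number of planes of $\L$ crossing $\face$ is at most the number crossing $\simplex_j$, which is at most $\nL/r \le \eps\nL$. This is exactly the weak $\eps$-cutting property, and dually it gives the weak $\eps$-net statement for corridors via \lemref{corridor:w}: a corridor containing $\ge \eps\nL$ points of $\dualX{\L}$ dualizes to a convex body crossed by $\ge \eps\nL$ lines of $\L$, which therefore is stabbed by a line of $\Net$, i.e., contains a point of $\dualX{\Net}$.

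There is essentially no hard step here — the only point requiring a word of care is the claim that each face of $\ArrX{\Net}$ lies inside one cutting simplex. This follows because the closed simplices $\{\simplex_j\}$ tile $\Re^d$ and a connected set disjoint from every bounding hyperplane of every simplex cannot meet two different simplices' interiors nor straddle a shared facet; if one prefers, one can take the cutting to be a simplicial partition so that the simplices are pairwise disjoint and cover $\Re^d$, making the containment immediate. The bound $O(1/\eps^d)$ absorbs the factor $d+1$ into the hidden constant, and everything is constructive in polynomial time since the $(1/r)$-cutting is.
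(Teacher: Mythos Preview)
Your proof is correct and follows essentially the same approach as the paper: compute a standard $(1/r)$-cutting with $r=\lceil 1/\eps\rceil$, take the supporting hyperplanes of the simplices' facets as $\Net$, and observe that any convex region (equivalently, any face of $\ArrX{\Net}$) avoiding $\Net$ lies inside a single simplex and hence meets at most $\nL/r\le\eps\nL$ planes of $\L$. The paper's argument is the same, phrased in terms of an arbitrary convex body rather than faces of $\ArrX{\Net}$.
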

\begin{proof}
    Compute regular cuttings of size $O(r^d)$ of $\L$, for
    $r = \ceil{1/\eps}$.
    Furthermore, such cuttings decompose $\Re^d$ into
    $O(r^d)$ simplices, where each simplex intersects at most $\nL/r$
    planes of $\L$.  We replace each $(d-1)$-dimensional face of a
    simplex in the cutting by the plane that supports it. Clearly, the
    resulting set $\Net$ of planes is of size $O(r^d)$, and fulfils the
    requirement of being a weak $\eps$-cutting.

    Indeed, consider any convex region $\body \subseteq \Re^d$. If
    $\body$ is not fully contained in a simplex of the cuttings, then
    it must intersect one of the planes of $\Net$. Otherwise, it is
    contained in a simple simplex of the cutting, say $\simplex$. But
    then, we have
    \begin{math}
        \cardin{\body \sqcap \L} \leq \cardin{\simplex \sqcap \L} \leq
        \eps \nL.
    \end{math}
\end{proof}

\subsection{A better construction in two dimensions}

A better construction, in two dimensions, arises by oversampling
coupled with breaking down the large faces into polygons with fewer
edges.

\paragraph{Complexity of the $i$\th largest face}

Let $\L$ be a set of lines in the plane, and consider the arrangement
$\Arr = \ArrX{\L}$.  The \emphi{complexity} of a face $\face$ of
$\Arr$, denoted by $|\face|$, is the number of edges/rays on the
boundary of $\face$.

\begin{lemma}
    \lemlab{dec:complexity}%
    For a set $\L$ of $\nu$ lines in the plane, let $c_i$ be the
    complexity of the $i$\th face of $\Arr=\ArrX{\L}$ in decreasing
    order of the complexity of the faces. Then
    \begin{math}
        c_i = O( \nu^{2/3} /i^{1/3} + \nu/i + 1).
    \end{math}
\end{lemma}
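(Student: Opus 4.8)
The plan is to bound $c_i$ by combining two classical facts: (i) the total complexity of any $m$ faces in an arrangement of $\nu$ lines is $O(m^{2/3}\nu^{2/3} + \nu)$ (the Clarkson--Edelsbrunner--Guibas--Sharir--Welzl many-faces bound), and (ii) the sum of the complexities of \emph{all} faces is $O(\nu^2)$, with each individual face trivially of complexity $O(\nu)$. Since the $c_i$ are sorted in decreasing order, I would first write $i \cdot c_i \le \sum_{j=1}^{i} c_j$, because $c_i$ is the smallest among the first $i$ values. Applying the many-faces bound to the $i$ most complex faces gives $\sum_{j=1}^{i} c_j = O(i^{2/3}\nu^{2/3} + \nu)$, and hence $c_i = O(i^{-1/3}\nu^{2/3} + \nu/i)$. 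The additive $+1$ in the claimed bound simply accounts for the fact that a face always has at least one edge (or for small $\nu$), so I would note it as a trivial lower-order term.

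Concretely, the steps in order are: first, recall/cite the many-faces complexity bound for arrangements of lines, stated as: for any set of $m$ distinct faces $\face_1,\dots,\face_m$ in $\ArrX{\L}$ with $|\L| = \nu$, one has $\sum_{k=1}^m |\face_k| = O(m^{2/3}\nu^{2/3} + \nu)$. Second, observe that since the faces are indexed in nonincreasing order of complexity, $c_i \le \frac1i \sum_{j=1}^i c_j$. Third, substitute $m = i$ into the many-faces bound to get $\sum_{j=1}^i c_j = O(i^{2/3}\nu^{2/3} + \nu)$. Fourth, divide by $i$ to obtain $c_i = O(\nu^{2/3}/i^{1/3} + \nu/i)$, and fold in the trivial $+1$ to cover degenerate/small cases. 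That completes the argument.

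The only real subtlety — and the step I would be most careful about — is making sure the many-faces bound is being invoked in the correct regime and with the correct dependence on the number of faces: the bound $O(m^{2/3}\nu^{2/3}+\nu)$ is the one that holds for \emph{any} collection of $m$ faces (not just the $m$ faces of maximum size), which is exactly what we need here since we are summing over the top $i$ faces. One should also sanity-check consistency: when $i$ is close to the total number of faces $\Theta(\nu^2)$, the bound gives $c_i = O(\nu^{2/3}\cdot\nu^{-2/3} + \nu/\nu^2 + 1) = O(1)$, which is correct since most faces in a line arrangement are triangles; and when $i = 1$ it gives $c_1 = O(\nu^{2/3} + \nu + 1) = O(\nu)$, also correct. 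No delicate computation is required beyond this — the lemma is essentially an averaging consequence of a known combinatorial bound, so the main "obstacle" is really just citing the right version of that bound rather than any genuine difficulty in the proof.
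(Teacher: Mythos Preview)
Your proposal is correct and follows essentially the same approach as the paper: cite the many-faces bound $\sum_{j=1}^{i} c_j = O(i^{2/3}\nu^{2/3} + \nu + i)$, use the averaging inequality $i\,c_i \le \sum_{j\le i} c_j$, and divide by $i$. The paper's version includes the $+i$ term in the many-faces bound (which yields the $+1$ after dividing), whereas you absorb it and add the $+1$ separately; this is immaterial since $i \le \nu^2$ always makes that term dominated.
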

\begin{proof}
    Let $c_i$ be the complexity of the largest $i$\th face in
    $\Arr$. The complexity of $i$ faces in the arrangement of $\Arr$
    is $M_i =  O( \nu^{2/3}i^{2/3} + \nu + i)$ \cite{sa-dsstg-95}.
    Namely, we have
    \begin{math}
        i c_i%
        \leq %
        \sum_{j=1}^i c_i%
        \leq%
        M_i,
    \end{math}
    which implies
    \begin{math}
        c_i \leq M_i /i.
    \end{math}~
\end{proof}

\begin{theorem}
    \thmlab{w:cutting:2:d}%
    Let $\L$ be a set of $\nL$ lines in $\Re^2$, and let
    $\eps \in (0,1)$ be a parameter. One can compute a set $\R$ of
    lines of size $O(\eps^{-3/2} \log^{3/2} \eps^{-1} )$, such that
    $\R$ is a weak $\eps$-cuttings of $\L$. That is, any open convex
    region $\region$ that avoids the lines of $\R$, intersects at most
    $\eps \nL$ lines of $\L$.

    In the dual, $\dualX{\R}$ is a weak $\eps$-net for corridors for
    the point set $\dualX{\L}$. That is, any corridor $\CR$ that
    avoids the points of $\dualX{R}$ contains at most $\eps \nL$
    points of $\dualX{\L}$.
\end{theorem}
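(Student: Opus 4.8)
The plan is to follow the outline already sketched in the introduction: oversample a $1/r$-cutting and then chop the few surviving large faces into low-complexity pieces, adding the supporting lines of all the resulting edges. Concretely, set $r = \Theta(\eps^{-1})$ with a logarithmic fudge factor to be fixed later, and take a random sample $\R_0 \subseteq \L$ of size $s = \Theta(r \log r)$. By the $\eps$-net theorem applied to the range space of lines versus ``double wedges'' (equivalently, by standard cutting arguments à la Clarkson--Shor), with positive probability every face of $\ArrX{\R_0}$ is crossed by at most $\nL/r$ lines of $\L$; fix such a sample. The arrangement $\ArrX{\R_0}$ has $O(s^2)$ faces and total complexity $O(s^2)$, so naively adding the supporting lines of all its edges would give $O(s^2) = \tldO(\eps^{-2})$ lines, which is the trivial bound. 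The point is that we do \emph{not} add all of them: most faces of $\ArrX{\R_0}$ are small.

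First I would separate the faces of $\ArrX{\R_0}$ into ``small'' faces, of complexity at most some threshold $\kappa$, and ``large'' faces, of complexity more than $\kappa$. For the small faces, add the supporting lines of all their boundary edges; since there are $O(s^2)$ faces of bounded complexity $\kappa$, this contributes $O(\kappa s^2)$ lines — too many unless $\kappa$ is a constant, so actually I want $\kappa$ constant, say $\kappa = 7$, and instead control the large faces carefully. By \lemref{dec:complexity} applied to $\R_0$ (with $\nu = s$), the number of faces of complexity more than $\kappa$ is at most the largest $i$ with $c_i > \kappa$, i.e. $i = O(s^2/\kappa^3 + s/\kappa) = O(s^2)$ — that is not yet a saving, so the real idea must be: add the $O(s^2)$ supporting lines of \emph{small} faces (constant per face is fine only if the number of faces we touch is $O(\eps^{-3/2}\cdots)$, which it is not). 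Let me reconsider: the actual mechanism is to triangulate/decompose each large face into $O(1)$-complexity cells using $O(|\face|)$ additional segments, and the supporting lines of those segments, summed over all faces, is $\sum_i O(c_i)$, but restricted only to the \emph{large} faces; the small faces already have $O(1)$ complexity and we keep them as is, so \emph{no} new lines are needed for small faces. Thus the count of added lines is $O\bigl(\sum_{i : c_i > \kappa} c_i\bigr)$. Using \lemref{dec:complexity}, $\sum_{i\le I} c_i = O(s^{2/3} I^{2/3} + s + I)$, and the number of large faces $I$ satisfies $c_I > \kappa = \Theta(1)$, giving $I = O(s^2)$ in the worst case — so to get the claimed bound one must instead balance: choose the threshold $\kappa$ so that the number of large faces times $\kappa$ (for re-decomposition) is minimized. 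Setting $\kappa \approx \nu^{2/3}/i^{1/3}$ equal to a parameter $t$, the number of faces with complexity $\ge t$ is $O(s^2/t^3)$, and each needs $O(t)$ cutting segments, so the total is $O(s^2/t^2)$ new lines; also each small face must separately still be partitioned to $O(1)$ complexity but it already is if $t = O(1)$ — contradiction again. The resolution is that we are allowed to \emph{not} reduce small faces at all because a convex region inside a face of $\ArrX{\R_0}$ is what we must certify, and that face, if it has complexity $\le t$, we refine into $O(t)$ triangles anyway: so total new lines $= O(s^2/t^2 \cdot t) + O(\text{small faces}) = O(s^2/t)$; choosing $t = \Theta(\sqrt{s}) = \Theta(\eps^{-1/2}\sqrt{\log})$ balances against... hmm, the correct balance yielding $\eps^{-3/2}$ is $s^2/t = s^{3/2}$, i.e. $t = \sqrt{s}$, giving $O(s^{3/2}) = O(\eps^{-3/2}\log^{3/2}\eps^{-1})$ lines. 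That is the bound in the statement.

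So the clean plan is: (1) fix the oversampled cutting $\R_0$ of size $s = \Theta(r\log r)$, $r = \ceil{1/\eps}$, so each face of $\ArrX{\R_0}$ is crossed by $\le \eps\nL$ lines of $\L$; (2) set threshold $t = \ceil{\sqrt{s}}$; (3) for each face $\face$ of $\ArrX{\R_0}$, if $|\face| \le t$ keep it, else this face is one of the $O(s^2/t^3)$ faces of complexity $> t$ by \lemref{dec:complexity}; (4) refine every face (large or small) into convex cells of complexity $O(1)$ by adding $O(\min(|\face|, t))$ straight segments and taking their supporting lines — actually into cells each of which lies in the interior of one original face; for large faces first cut them by a grid of $O(|\face|/t)$ vertical lines into sub-faces of complexity $O(t)$ then triangulate; (5) bound the number of added lines: $\sum_{\text{small } \face} O(1)$ is dominated by $\sum_{\text{large }\face} O(|\face|/t)\cdot O(1)$ hmm — I will simply bound the total added lines by $O(s + \sum_{i} c_i/t)$ over large faces which by the partial-sum bound $\sum_{i\le I} c_i = O(s^{2/3}I^{2/3}+s+I)$ with $I = O(s^2/t^3)$ evaluates to $O(s^{3/2})$; (6) set $\R = \R_0$ together with all these added supporting lines, so $|\R| = O(s^{3/2}) = O(\eps^{-3/2}\log^{3/2}\eps^{-1})$; (7) correctness: any open convex $\region$ avoiding $\R$ lies inside a single cell of the refined arrangement, hence inside a single face of $\ArrX{\R_0}$, hence is crossed by $\le \eps\nL$ lines of $\L$; (8) the dual statement is immediate from \lemref{corridor:w}, since a corridor is the dual image of the set of lines crossing a convex region, and $\dualX{\R}$ avoiding a point $\dualX{\hp}$ corresponds to $\hp$ missing $\R$.

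The main obstacle I anticipate is step (4)--(5): making the decomposition of the large faces use only $O(|\face|/t)$ \emph{new lines} (not $O(|\face|)$) while still reducing every resulting cell to $O(1)$ complexity, and then showing the summation $\sum c_i / t$ over large faces is $O(s^{3/2})$ via \lemref{dec:complexity} — this requires care in choosing $t$ and in the partial-sum estimate $M_i = O(\nu^{2/3}i^{2/3}+\nu+i)$ from \cite{sa-dsstg-95}. A secondary subtlety is that cutting a non-convex face of a line arrangement by vertical lines produces sub-faces that may still be bounded by many original edges; one fixes this by using the standard vertical (trapezoidal) decomposition of $\ArrX{\R_0}$ from the start, so every cell is a trapezoid of complexity $\le 4$, and then the ``large face'' count becomes a ``heavy trapezoid'' count — but the trapezoidal decomposition already has $O(s^2)$ cells, so one must be careful that oversampling still forces each trapezoid to be crossed by $\le \eps\nL$ lines, which is exactly the content of the cutting lemma for the oversampled sample. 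I expect the rest — the probabilistic existence of the good sample via \thmref{epsilon:net}, and the duality translation — to be routine.
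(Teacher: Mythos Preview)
There is a genuine gap at step~(1). You assert that for a random sample $\R_0$ of size $s=\Theta(r\log r)$, every \emph{face} of $\ArrX{\R_0}$ is crossed by at most $\nL/r$ lines of $\L$. This is false: the range space of lines versus convex regions has unbounded \VC dimension (dually, points on a parabola are shattered by corridors, as the paper itself observes). Concretely, if $\L$ consists of $\nL$ tangent lines to a parabola, then for \emph{any} sample $S\subset \L$ the unbounded face above the upper envelope of $S$ is crossed by all $\nL-|S|$ remaining lines. Notice, too, that if your claim held, $\R_0$ by itself would already be a weak $\eps$-cutting of size $O(\eps^{-1}\log\eps^{-1})$, and the remainder of your construction would be superfluous. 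What the $\eps$-net theorem actually gives you is only a bound for vertical \emph{trapezoids}: a face of complexity $c$ decomposes into $c$ such trapezoids and may be crossed by as many as $c\cdot \nL/r$ lines.

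The paper's argument exploits exactly this last inequality by tying the oversampling factor to the complexity threshold. It samples $\nu=\Theta(\alpha r\log r)$ lines so that every trapezoid avoiding the sample has conflict list at most $\nL/(2\alpha r)$; then any face with at most $\alpha$ edges has conflict list at most $\alpha\cdot \nL/(2\alpha r)=\nL/(2r)\le \eps\nL$. Faces with more than $\alpha$ edges are sliced by vertical lines, one through every $\alpha$\th vertex, into sub-polygons of complexity at most $\alpha$. The many-faces bound (\lemref{dec:complexity}) then shows the total complexity of the large faces is $O(r^2\log^2 r)$, so only $O(r^2\log^2 r/\alpha)$ vertical lines are added; balancing this against $\nu$ yields $\alpha=\Theta(\sqrt{r\log r})$ and the stated size. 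You have the right ingredients --- random sample, complexity threshold, many-faces bound --- but the wiring is wrong: the oversampling factor and the complexity threshold must coincide, and the correctness claim in your step~(7) must come from summing the trapezoid-level bounds inside a single low-complexity piece, not from a nonexistent face-level net property.
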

\begin{proof}
    Let $r = \ceil{10/\eps}$. Let $\R_1$ be a random sample from $\L$
    of size $\nu = c \alpha r \log r$, where $c$ is a sufficiently
    large constant, and $\alpha \in \IRX{r^3}$ is a parameter. The
    sample $\R_1$ is an $\delta$-net for $\L$ for vertical trapezoids,
    where $\delta = 1/(2\alpha r)$, with probability close to one. In
    particular, consider a face $\face$ of $\ArrX{\R_1}$. If $\face$
    has at most $\alpha$ edges, then it can be decomposed into
    $\alpha$ vertical trapezoids, each one intersecting at most
    $\delta \nL$ lines of $\L$. As such,
    $|\face \cap \L| \leq \alpha \delta \nL \leq \nL/r $,

    Thus, we need to fix only large faces with strictly more than
    $\alpha$ edges. Let $\face$ be such a face, and sweep it from left
    to right by a vertical line, whenever the sweep line encounters
    the $\alpha i$\th vertex of $\face$, introduce a vertical line to
    break it into smaller faces. Let $\R_2$ be the resulting set of
    new lines introduced. If the total number of edges of faces with
    more $\alpha$ edges is $I$, then the overall number of lines
    introduced is $I/\alpha$. Let $\R = \R_1 \cup \R_2$.

    Consider a face $\face$ of $\ArrX{\R}$ that is contained in a face
    $\face'$ of $\ArrX{\R_1}$. There are several possibilities:
    \begin{compactenumi}
        \smallskip
        \item $\face$ has at most $\alpha$ edges. Then, $\face$ can be
        decomposed into $\alpha$ vertical trapezoids that avoids
        $\R_1$, and $\cardin{\face \cap \L} \leq \nL/r$, following the
        argument above.

        \smallskip
        \item The face $\face'$ has at most $\alpha$ edges -- the same
        argument implies
        $\cardin{\face \cap \L} \leq \cardin{\face' \cap \L} \leq
        \nL/r$.

        \smallskip
        \item Otherwise, the above process introduced vertical lines
        into $\R_2$ that break $\face'$ into polygons with at most
        $\alpha$ edges. In particular, one of these polygons, say,
        $\Polygon \subseteq \face'$ contains $\face$. Arguing as
        above, we have
        $\cardin{\face \cap \L} \leq \cardin{\Polygon \cap \L } \leq
        \nL/r$.
    \end{compactenumi}
    \smallskip%
    Now, consider any convex region $\body$ that avoids the lines of
    $\R$, and observe that it is contained in a single face of
    $\ArrX{\R}$, which intersects at most $\eps \nL$ lines of $\L$, by
    the above. Thus $\R$ is the desired weak $\eps$-cutting.

    Recall that $\nu = \cardin{\R_1} = c \alpha r \log r$. By
    \lemref{dec:complexity}, if we require that
    $ \nu^{2/3} /m^{1/3} + \nu/m + 1 \leq c'\alpha$, for $c'$ a
    sufficiently small constant. Thus, the complexity of the $m$\th
    face in $\ArrX{\R_1} $ is at most $\alpha$. This holds if
    $\nu^{2/3} /m^{1/3}\leq c'\alpha/3$ and $\nu /m\leq
    c'\alpha/3$. This in turns holds if \NotSODAVer{%
       \begin{equation*}
           m \geq  \pth{\frac{\nu^{2/3}}{c' \alpha/3}}^3
           =%
           \Omega\pth{ \frac{\nu^2}{\alpha^3} }
           =%
           \Omega\pth{ \frac{ r^2 \log^2 r}{\alpha} }.
       \end{equation*}%
    }
    \SODAVer{
    \begin{math}
        m \geq  \pth{\frac{\nu^{2/3}}{c' \alpha/3}}^3
        =%
        \Omega\pth{ \frac{\nu^2}{\alpha^3} }
        =%
        \Omega\pth{ \frac{ r^2 \log^2 r}{\alpha} }.
    \end{math}%
 } The total complexity of these $m$ large faces is
        \begin{math}
            I%
            =%
            O(m^{2/3} \nu^{2/3} + m + \nu)%
            =%
            \NotSODAVer{%
               O\Bigl( \pth{ \frac{ r^2 \log^2 r}{\alpha} }^{2/3}
               (\alpha r \log r)^{2/3} \Bigr)%
               =}%
            O( r^2 \log^2 r).
        \end{math}
        As such, the set $\R$ has size
        $\nu + I /\alpha = O\pth{ \alpha r \log r + \frac{r^2\log^2
              r}{\alpha}}$, which is minimized for
        $\alpha = O( \sqrt{r \log r})$.~
\end{proof}

\section{The reduction problem:  Approximation algorithm}
\seclab{red:problem}

The input instance $\Inst$ is made of $\numS$ point sets
$\P_1, \ldots, \P_\numS$ in $\Re^2$, not necessarily disjoint, and
fractions $\fr_1, \ldots, \fr_\numS \in (0,1]$. Let
$\fr = \min_i \fr_i$. Furthermore, let $\nP_i = \cardin{\P_i}$, for
$i \in \IRX{\numS}$, and $\nP = \sum_i \cardin{\P_i} $.  As a
reminder, the goal is to compute the smallest set of lines $\L$, such
that for every cell $\cell$ in the arrangement $\ArrX{\L}$ of lines,
\begin{math}
    \cardin{\P_i \cap \cell}%
    \leq%
    \fr_i \cardin{\P_i},
\end{math}
for all $i$.

\subsection{Reducing the number of candidate cutting lines}

\begin{observation}
    Consider the range space where the ground set is $\Re^2$, and
    ranges are corridors formed by $3$ lines. The \VC dimension of
    this range space is $O(1)$.
\end{observation}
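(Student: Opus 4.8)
The claim is that the range space on $\Re^2$ whose ranges are corridors defined by at most $3$ lines has constant VC dimension. The plan is to bound the number of distinct ``labelings'' such ranges can induce on a finite point set, and invoke the standard fact that a linear-in-$n$ bound on the dual shatter function (or a polynomial growth function) forces constant VC dimension. Concretely, I would argue via a parametrization/composition argument: a corridor $\corrX{\{\hp_1,\hp_2,\hp_3\}}$ is the union of all lines in the convex hull (in the line-space from \secref{prelims}) of the three lines $\hp_1,\hp_2,\hp_3$, so it is determined by $3$ lines, i.e.\ by $9$ real parameters. Thus the family of such corridors lives in a semialgebraic family of bounded description complexity. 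By the Milnor--Thom / Warren-type bounds on sign patterns of bounded-degree polynomials (equivalently, the machinery behind VC dimension of semialgebraic set systems), the VC dimension of any such family is $O(1)$, with the constant depending only on the dimension $9$ and the degrees involved.

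Alternatively — and this is the route I would actually write out to keep things elementary — I would use \lemref{corridor:w}(B): a point $\pa$ lies in $\corrX{\{\hp_1,\hp_2,\hp_3\}}$ iff the dual line $\dualX{\pa}$ intersects the triangle $\CHX{\{\dualX{\hp_1},\dualX{\hp_2},\dualX{\hp_3}\}}$. So membership of $\pa$ in the corridor is equivalent to $\dualX{\pa}$ stabbing a fixed triangle, i.e.\ the dual line passing ``between'' the three dual points in an appropriate sense. Hitting a triangle with a line is a condition cut out by a constant number of linear inequalities in the coefficients of the line (the line must separate the triangle's vertices in a way that leaves it non-empty on both sides — formally, $\min$ and $\max$ of the three evaluations straddle the vertical coordinate). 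Dualizing back, membership of $\pa$ in the corridor is a Boolean combination of a bounded number of sign conditions, each of the form ``$\pa$ lies above/below one of the $O(1)$ lines obtained from $\hp_1,\hp_2,\hp_3$.'' Since each such sign condition is a halfplane range and halfplanes have VC dimension $3$, and corridors are obtained from a bounded number of them by a fixed Boolean formula, the VC dimension of the corridor range space is $O(1)$ by the standard closure property of VC classes under bounded Boolean combinations.

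The main obstacle is being careful about what ``$\pa$ lies in $\corrX{\L}$'' unpacks to geometrically: the corridor is the region between the upper and lower envelopes of the three lines, so $\pa \in \corrX{\L}$ iff $\pa$ is below the upper envelope and above the lower envelope, i.e.\ $\min_j \hp_j(\pa) \le \pa_d \le \max_j \hp_j(\pa)$ (with appropriate handling of which line is which — but this is exactly a bounded Boolean combination over the $3$ halfplanes $\{\pa_d \le \hp_j(\pa)\}$ and $\{\pa_d \ge \hp_j(\pa)\}$). Once this reformulation is in hand, the VC bound is immediate from closure under $O(1)$-fold Boolean operations, so I expect the whole argument to be short. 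The only thing to double-check is that degenerate cases (parallel lines among the three, or a ``corridor'' that is actually a double wedge) do not change the count of sign conditions — they do not, since those cases only merge or drop constraints.
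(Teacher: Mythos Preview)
Your proposal is correct. The paper states this as an observation without proof, so there is no argument to compare against; your second route (membership in the corridor is $\min_j \hp_j(\pa) \le \pa_2 \le \max_j \hp_j(\pa)$, hence a fixed Boolean combination of six halfplane conditions, hence bounded VC dimension by the standard closure lemma) is exactly the kind of one-line justification the authors presumably had in mind and is the cleanest way to fill the gap.
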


Given a set $\Q$ of points in general position. Let $\LinesX{\Q}$ be
the set of all lines passing through pairs of points of $\Q$.  The set
$\LinesX{\Q}$ has size $O( |\Q|^2)$, and it can be computed in this
time.

\begin{lemma}
    \lemlab{lines:r}
    Given an instance $\Inst$ to the reduction problem, in the plane,
    with $\numS$ different sets and $\fr = \min_i \fr_i$. One can
    compute a set of lines $\L$, of size
    \begin{math}
        O( (\numS / \fr)^2 \log^2 (\numS / \fr) ),
    \end{math}
    such that there is a solution for $\Inst$ of size $\leq 3\opt$,
    made out of lines from $\L$, where $\opt$ is the size of the
    optimal solution (where any line in he plane can be used). The
    running time of the algorithm is bounded by the output size.
\end{lemma}

\begin{proof}
    Let $\Net_i$ be a $\fr_i/2$-net of $\P_i$ for corridors formed by
    three lines. By the above, a sample of size
    $O( \fr_i^{-1} \log (\numS/\fr_i) )$ is such a net with
    probability $\geq 1 - 1/\numS^{O(1)}$. Let $\Net = \cup_i \Net_i$,
    and let $\L = \LinesX{\Net}$.

    We claim that $\L$ is the desired set of lines. Assume there is an
    optimal solution $\OLines = \{ \OLine_1, \ldots \OLine_\opt
    \}$. Each line $\OLine_i$, separates $\Net$ into two sets
    $\Q_i^+, \Q_i^-$. Consider the polygons $\CHX{\Q_i^-}$ and
    $\CHX{\Q_i^+}$. Let $\L_i \subseteq \L$ be the lines the edges of
    these convex-hulls induces, as well as the two lines realizing the
    cross tangents.

    The corridor $\corrX{\L_i}$ does not contain any point of $\Net$
    in its interior, and $\OLine_i \subseteq \corrX{\L_i}$.  By dual
    \Caratheodory theorem, \lemref{corridor:w} (D), there is a set
    $\L_i' \subseteq \L_i$ of three lines, such that
    $\OLine_i \subseteq \corrX{\L_i'}$.  We claim that
    $\H = \cup_i \L_i'$ is a valid solution to the reduction set, and
    $|\H| \leq 3 \opt$.

    Consider any face $\face$ of $\ArrX{\H}$. If it is contained in a
    face $\face'$ of $\ArrX{\OLines}$, then, for any $i$, we have
    $|\face \cap \P_i| \leq |\face' \cap \P_i| \leq \fr_i
    |\P_i|$. Otherwise, $\face$ must be crossed by an optimal line,
    say, $\OLine_i$. But then, $\face \subseteq \corrX{\L_i'}$. This
    implies that, for any $j$, we have
    \begin{equation*}
        |\face \cap \P_j|
        \leq%
        |\corrX{\L_i'} \cap \P_j|
        \leq%
        (\fr_j/2) |\P_j|,
    \end{equation*}
    since $\corrX{\L_i'}$ does not contain any point of $\Net$ (and
    thus of $\Net_j$) in its interior, and $\Net_j$ is a $\fr_j/2$-net
    for corridors induced by three lines for $\P_j$. In the above, we
    treated both $\face$ and $\corrX{\L_i'}$ as open sets.  One need
    to repeat the above argument also for edges of the arrangement
    $\ArrX{\H}$, but this case is easier, as can be easily verified.
    We have
    $|\Net | = \sum_i O( \fr_i^{-1} \log (\numS/\fr_i) ) = O\bigl(
    (\numS/\fr) \log (\numS/\fr ) \bigr)$, and thus
    $|\L| = O(|\Net|^2 ) = O\bigl( (\numS/\fr)^2 \log^2 (\numS/\fr )
    \bigr)$.~
\end{proof}

\subsection{Implicit \LP, separation oracle and fractional solution}

Given an instance $\Inst$ of the reduction problem, a set $\L$ of
$\nL$ lines, consider the set of all ``bad'' regions. To this end, let
$\Polygons$ be the set of all convex sets in the plane.  We assume
that no two vertices of $\ArrX{\L}$ have the same $x$ value.

A convex region $\pg \in \Polygons$ is \emphi{bad} if there is an index $j$,
such that $|\P_j \cap \pg | > \fr_j |\P_j|$, and let $\BPolygons$ be
the set of all bad regions in $\Polygons$. The associated \LP for
computing a hitting set, of at most $t$ lines, for all bad polygons
is:
\begin{align}
  \valX{L} =
  &
    \sum_{\Line \in \L} x_\Line \leq t
    \eqlab{lp:t}
  \\
  &1 \geq x_\Line \geq 0
  & \forall \Line \in \L
    \nonumber
  \\
  \valX{\pg} = &\sum_{\Line \in \L \sqcap \pg} x_\Line \geq 1
  & \forall \pg \in \BPolygons. \tag{*}
\end{align}
A \emphi{separation oracle} is a procedure that gets assignment of
fractional values for the variables of the \LP, and returns a violated
constraint if such a constraint exists. For our purposes it is enough
to find an approximate violation.  The first two conditions in the
above \LP can be checked directly, and if they are violated, they are
returned as violated to the \LP solver. Otherwise, the algorithm tries
to round the solution as described next.

\subsubsection{The rounding attempt}

A fractional solution to the above \LP can be efficiently rounded. One
need to adapt the algorithm of \secref{weak:e:net} to work for the
case that the lines have weights. Conceptually, we treat
$\alpha =v(\L)$ as the number of lines we have, and
$\eps = 1/(2\alpha)$ as the desired threshold. The $\eps$-net theorem
applies verbatim in the weighted settings (the sampling has to be
adapted to the weights, but this is standard), and the algorithm of
\secref{weak:e:net} applies verbatim. We get a weak $\eps$-cutting of
size $O(\alpha^{3/2} \log^{3/2} \alpha )$, realized by a set $\W$ of
lines. For every face $\face$ of $\ArrX{\W}$ the algorithm computes
\begin{equation*}
    \valX{\face} = \sum\nolimits_{\Line \in \L \sqcap \face} v(\Line),
    \qquad\text{and}\qquad
    |\face \cap \P_j|, \quad j=1,\ldots, k.
\end{equation*}
If $ \valX{\face} \leq 1/2$ then the weak cutting computed failed
(which happens with low probability), and the algorithm recomputes the
weak cutting.  If there is any $j$ such that
$|\face \cap \P_j| > \fr_j |\P_j|$, then the rounding
failed. Namely, we found a bad region $\pg \in \BPolygons$ (i.e., a
constraint of the \LP that is violated). The algorithm returns the
corresponding constraint of (*) as being violated
\begin{equation*}
    \valX{\face} = \sum\nolimits_{\Line \in \L \sqcap \face} x_\Line \geq 1,
\end{equation*}
as $\valX{\face} < 1/2$.  If all the faces of $\ArrX{\W}$ are good,
then $\W$ is a valid solution, with
$O( \alpha^{3/2} \log^{3/2} \alpha)$ lines.

\begin{remark}
    The ellipsoid algorithm (with a separation oracle) solves an \LP
    with $\nL$ variables, using a number of iterations that is
    polynomial in $\nL$ and $\log$ of the largest number in the \LP,
    which is $\nL$ (in our case). See \cite{gls-gaco-93}.
\end{remark}

\begin{theorem}
    \thmlab{red:approx}%
    Given an instance
    $\Inst =(\P_1,\fr_1, \ldots, \P_\numS, \fr_\numS)$ of the
    \probrefY{prob}{reduction problem} in the plane of size $\nP$,
    with $\numS$ sets and $\fr = \min_i \fr_i$, one can compute a set
    $\L$ of $O(\opt^{3/2} \log^{3/2} \opt)$ lines, such that for any
    cell $\cell$ of $\ArrX{\L}$, and any $j\in\IRX{\numS}$, we have
    that $|\P_j \cap \cell| \leq \fr_j |\P_j|$, where $\opt$ is the
    minimum size of any set of lines with this property.

    The expected running time of this algorithm is
    $O\bigl( \nP \log( \numS / \fr) + ( \numS / \fr )^{O(1)}
    \bigr)$, and the algorithm succeeds with probability
    $\geq 1- (\fr/\numS)^{O(1)}$.
\end{theorem}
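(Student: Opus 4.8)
The plan is to assemble \thmref{red:approx} from the three ingredients already built: the candidate-line reduction of \lemref{lines:r}, the implicit \LP together with its round-and-cut separation oracle, and the weak $\eps$-cutting construction of \thmref{w:cutting:2:d}. First I would apply \lemref{lines:r} to replace the $O(\nP^2)$ potential cutting lines by a set $\L_0$ of size $O((\numS/\fr)^2 \log^2(\numS/\fr))$ that is guaranteed to contain a solution of size $\leq 3\opt$; this is the preprocessing step whose cost is $O(\nP \log(\numS/\fr))$ for building the nets $\Net_i$ plus $O((\numS/\fr)^{O(1)})$ for forming all pairwise lines and, later, running the \LP machinery on them. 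From now on all work is with the $\nL = |\L_0|$ variables of the \LP in \Eqref{lp:t}.

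Next I would run the ellipsoid method on this \LP, guessing the target value $t$ by binary search (or doubling) over $[1, 3\opt]$; since the reduced instance admits an integral feasible solution of size $\leq 3\opt$, the \LP with $t = 3\opt$ is feasible, and by the remark after the rounding attempt the number of ellipsoid iterations is polynomial in $\nL$ and $\log \nL$, hence $(\numS/\fr)^{O(1)}$. Each iteration calls the separation oracle of \secref{red:problem}: the first two families of constraints are checked directly, and otherwise we invoke the rounding attempt, which builds a weighted weak $\eps$-cutting $\W$ with $\eps = 1/(2\alpha)$, $\alpha = \valX{\L} \leq t$, of size $O(\alpha^{3/2}\log^{3/2}\alpha) = O(t^{3/2}\log^{3/2} t)$ via \thmref{w:cutting:2:d}. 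Either some face of $\ArrX{\W}$ is bad — and we return the corresponding violated constraint of type (*) to the \LP solver — or $\W$ itself is a valid separating set of the claimed size. When the binary search settles on the smallest feasible $t$, we have $t \leq 3\opt$, so the output set has size $O((3\opt)^{3/2}\log^{3/2}(3\opt)) = O(\opt^{3/2}\log^{3/2}\opt)$, as required.

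For correctness of the oracle I would note the standard round-and-cut invariant: whenever the oracle does not return a violated constraint, it has produced a genuine feasible integral solution, so the \LP solver's claim of feasibility at value $t$ is witnessed by an actual cutting of size $O(t^{3/2}\log^{3/2} t)$; conversely, when a bad face $\face$ is found, the constraint $\sum_{\Line \in \L \sqcap \face} x_\Line \geq 1$ is violated because the rounding only proceeds when $\valX{\face} < 1/2$, so this is a legitimate cut. The failure probability of each weak-cutting construction is polynomially small; since the number of oracle calls is polynomial, a union bound gives overall success probability $\geq 1 - (\fr/\numS)^{O(1)}$, absorbing also the failure probability of the nets in \lemref{lines:r}. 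Summing the running time: $O(\nP \log(\numS/\fr))$ for the nets, and $(\numS/\fr)^{O(1)}$ for everything downstream (the candidate lines, the ellipsoid iterations, and each polynomial-time oracle call on $\nL = O((\numS/\fr)^2\log^2(\numS/\fr))$ lines), giving the stated bound $O(\nP\log(\numS/\fr) + (\numS/\fr)^{O(1)})$.

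The main obstacle I expect is making the weighted version of the $\eps$-net / weak-cutting machinery rigorous inside the oracle, and in particular ensuring that the oracle's output is consistent enough for the ellipsoid method: one must argue that a polynomially-small-probability failure of the cutting construction does not derail the \LP solver (handled by recomputing on failure, as the text indicates), and that the "approximate violation" returned — a constraint whose slack is at least a constant rather than exactly tight — still suffices to drive ellipsoid convergence with only a constant-factor loss in the guarantee. The remaining steps (the binary search over $t$, the edge-versus-face case analysis, and the arithmetic of the time bound) are routine once this is in place.
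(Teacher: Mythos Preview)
Your outline matches the paper's approach almost exactly: reduce the candidate lines via \lemref{lines:r}, run the ellipsoid method with round-and-cut using the weighted weak $\eps$-cutting of \thmref{w:cutting:2:d} as the rounding step, and search over $t$. The correctness argument you give is the same as the paper's.

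There is, however, a real gap in your running-time accounting. You assert that ``everything downstream'' of the nets costs $(\numS/\fr)^{O(1)}$, but each oracle call must determine whether some face $\face$ of $\ArrX{\W}$ is bad, i.e., compute $|\P_j \cap \face|$ for every $j$. Done by point location this is $\Theta(\nP)$ work \emph{per oracle call}, and since there are $(\numS/\fr)^{O(1)}$ such calls you only get $\nP \cdot (\numS/\fr)^{O(1)}$ total, not the claimed $\nP \log(\numS/\fr) + (\numS/\fr)^{O(1)}$. The paper closes this gap with an extra preprocessing step: the rounding procedure only ever introduces vertical lines through vertices of $\ArrX{\L_0}$, so one can compute the arrangement $\ArrX{\L_0 \cup \L'}$ of all these $(\numS/\fr)^{O(1)}$ lines once, and in a single $O(\nP \log(\numS/\fr))$ pass tabulate $|\P_j \cap \face|$ for every face of this fixed arrangement. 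Every face that arises in any later rounding attempt is a union of these basic faces, so subsequent oracle calls touch only the $(\numS/\fr)^{O(1)}$-size summaries and never the $\nP$ input points again. Without this decoupling step your argument establishes only the weaker $(\nP + \numS/\fr)^{O(1)}$ bound, which the paper itself notes as the naive running time before the optimization.
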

\begin{proof}
    A naive upper bound on $\opt$ is $O(\numS/\fr)$, as each set
    $\P_i$ can be partitioned using $\ceil{1/\fr_i}$ lines.

    We first generate a small set of lines $\L$ as candidates for
    cutting the set, using \lemref{lines:r}.  This stage succeeds with
    probability $\geq 1- (\fr/\numS)^{O(1)}$.  The resulting set of
    lines $\L$ has size
    \begin{math}
        \nL = O\bigl( (\numS / \fr)^2 \log^2 (\numS / \fr) \bigr).
    \end{math}

    We now run the above algorithm with exponential search on the
    parameter $t$ used in the \LP from $1$ up to $O( \numS /\fr)$, see
    \Eqref{lp:t}, stopping as soon as the algorithm succeeds. The
    number of separation oracle calls performed in each attempt to
    solve the \LP is $\nL^{O(1)} $. Each such attempt
    involves computing the weak cuttings, which can be done in
    $O( t^3)$ time. Verifying that no face contains too many points of
    any set $\P_j$, can be done by preprocessing the arrangement for
    point-location queries. This takes $O( \nP \log t)$ time for
    point locations per rounding attempt.
    Overall, this results in running time
    $( \nP + \numS / \fr)^{O(1)} $.

    To get an improved running time, we have to avoid performing the
    point-location stage at each rounding attempt. To this end,
    observe that the weak cuttings only adds vertical lines that
    passes through vertices of $\ArrX{\L}$. Let $\L'$ be this set of
    vertical lines, and compute the arrangement $\ArrX{\L \cup
       \L'}$. This arrangement has $( \numS / \fr)^{O(1)}$ faces, an
    it is enough to compute for each face in this arrangement how many
    points of $\P_j$ falls into this face, as all polygons/faces
    considered by the above algorithm are disjoint union of such basic
    faces. Reducing $\P_1, \ldots, \P_\numS$ in this way takes
    $O\bigl( ( \numS / \fr)^{O(1)} + \nP \log ( \numS / \fr) )$
    time. From this point on, we have that the ``reduced'' point sets
    have total size $ ( \numS / \fr)^{O(1)}$,
    and thus each rounding
    attempt can be performed in
    $ (\numS/\fr)^{O(1)}$ time.~
\end{proof}

\section{Approximation algorithm for the guarding problem}
\seclab{g:convex}

The input instance $\Inst$ is made of $\numS$ point sets
$\P_1, \ldots, \P_\numS$ in $\Re^2$ not necessarily distinct, and
fractions $\fr_1, \ldots, \fr_\numS \in (0, 1]$. Let
$\fr = \min_i \fr_i$ and $\nP_i = \cardin{\P_i}$, for
$i\in \IRX{\numS}$, and let $\nP = \sum_i \nP_i$. As a reminder, the
goal is to compute the smallest set of points $\Net$ that is
simultaneously a weak $\fr_i$-net for every point set $\P_i$, that is
for any convex set $\region$ in $\Re^2$ with
$\cardin{\P_i \cap \region} \geq \fr_i \nP_i$, we have that
$\Net \cap \region \neq \emptyset$.  Let $\P = \cup_{i=1}^\numS \P_i$.

\subsection{Reducing the number of candidate points}

Given an instance $\Inst$ of the guarding problem, we again consider
the set of all convex sets $\Polygons$, and focus on the set of bad
polygons $\BPolygons$.  As a reminder, $\pg \in \Polygons$ is
\emphw{bad}, if there is an index $j \in \IRX{\numS}$, such that
$| \P_j\cap \pg | \ge \fr_j |\P_j|$.  We make a few observations about
bad polygons:
\begin{compactenumI}
    \smallskip%
    \item Since we care only with how a convex set interact with the
    set $\P$, it suffices to restrict $\BPolygons$ to the set of bad
    convex polygons that are the convex hull of some subset of points
    of $\P$. We denote this set by
    $\BRP= \BPolygons( \P) = \Set{ \CHX{\pg \cap \P}}{ \pg \in
       \BPolygons}$.

    \smallskip%
    \item Any solution $\Net$ stabs all the polygons in $\BRP$.

    \smallskip%
    \item Thus, consider the arrangement $\ArrX{\BRP}$. Any point
    $\p   \in \Net$, can be moved to a vertex of the face of this
    arrangement that contains it, and it would still stab the same
    polygons of $\BRP$. Since the vertices are defined
    by edges of convex hulls of points in $\P$, we can restrict $\Net$
    to be a subset of the vertices of this arrangement.

    \smallskip%
    \item \itemlab{guard:locations} Specifically, it suffices to
    restrict to point guards that lie on the intersection of two line
    segments joining pairs of points of $\P$. There are $O(|\P|^2)$
    such line segments and thus $O(|\P|^4)$ points defined this way.
    Let $\Q$ be the set of all such points.
\end{compactenumI}

\subsection{The implicit \LP}

The associated \LP for hitting set of the polygons $\BRP$ with at most
$t$ points is:
\begin{align*}
  \valX{\Q} =& \sum_{\p \in \Q} x_\p \leq t
  \\
             &1 \geq x_\p \geq 0
             & \forall \p \in \Q
  \\
  \valX{\pg} = &\sum_{\p \in \Q \cap \pg} x_\p \geq 1
             & \forall \pg \in \BRP. \tag{**}
\end{align*}

\subsection{The rounding scheme}

We will use following result of Rubin \cite{r-ibwen-18} about weak
$\eps$-nets.

\begin{theorem}
    Let $\P$ be a set of $\nP$ points in $\Re^2$, and let
    $\eps \in (0,1)$ be a parameter.  For any $\alpha> 0$, one can
    compute a weak $\eps$-net $S$ of size $O(\eps^{-3/2-\alpha})$ in
    time $\widetilde{O}(\nP^2/\sqrt{\eps})$, where $\widetilde{O}$
    hides polylogarithmic factors in $\nP$.
\end{theorem}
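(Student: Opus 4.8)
Since this is Rubin's theorem \cite{r-ibwen-18}, the plan is to reconstruct its proof; we only sketch the construction here. The classical route to a weak $\eps$-net whose size is polynomial in $\eps^{-1}$ is the B\'ar\'any--F\"uredi--Lov\'asz argument through the first selection lemma: an $\eps$-heavy convex region spans $\Omega((\eps \nP)^3)$ triangles of $\P$, all contained in it, and iterating the first selection lemma (``some fixed point stabs a constant fraction of all triangles spanned by any given subset'') yields a weak $\eps$-net of size $O(\eps^{-2})$ in the plane. Rubin's improvement to $O(\eps^{-3/2-\alpha})$ replaces this single-shot argument by a recursive construction over geometric scales, where the extra half-power is saved by invoking the planar $k$-set bound (Dey's $O(m^{4/3})$ bound on halving lines, via the crossing-number argument) in place of the trivial $O(m^2)$ count of lines through pairs of sample points.

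The construction I would carry out: fix scales $\nP_0 > \nP_1 > \cdots > \nP_\ell$ with $\nP_0 = \nP$, $\nP_\ell = \Theta(\eps \nP)$, decreasing geometrically, and at each scale draw an independent random sample $R_j \subseteq \P$ of size $\Theta((\nP/\nP_j)\log(\nP/\nP_j))$ (up to a suitable polylog), so that with high probability $R_j$ is a $(\nP_j/\nP)$-net for the relevant ranges at the resolution of scale $j$. Process an $\eps$-heavy convex region $\region$ top-down through the scales according to a dichotomy: either $\region$ is \emph{localized} at scale $j$ --- its intersection with $\P$ clusters into a bounded number of cells of the triangulated arrangement $\ArrX{R_j}$ --- in which case we recurse inside the heaviest such cell (where $\region$ is $\eps'$-heavy for a correspondingly larger $\eps'$, and any net point found there lies in $\region$); or $\region$ \emph{spreads} across scale $j$, in which case $\region$ contains enough points of $R_j$ to determine a canonical witness (a Radon point, or the first-selection point of a constant-size subconfiguration inside $\region$), which we insert into $S_j$. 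The output is $S = \bigcup_j S_j$. Correctness: a region localized at some scale is handled by induction inside the cell, while a region that survives all localizations down to scale $\ell$ must contain a canonical witness there --- a region ``thin'' enough to dodge every localization would contain fewer than $\eps\nP$ points of $\P$, a contradiction.

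For the size bound one counts canonical points: at scale $j$ their number is $O(|R_j|^{4/3})$ rather than the naive $O(|R_j|^2)$, by the planar $k$-set bound; multiplying by the number of recursive calls that reach scale $j$ and summing the geometric progression yields $\sum_j |S_j| = O(\eps^{-3/2-\alpha})$, with $\alpha$ absorbing the $\mathrm{polylog}(\eps^{-1})$ factors from the recursion. Every primitive is constructive: the Radon or first-selection point of $O(1)$ points is immediate; deciding localization and locating the heavy cell is planar point location; and enumerating the scale-$j$ configurations reduces to computing an arrangement, or a $(\leq k)$-level, on $R_j$ together with the net points chosen so far. Propagating $\P$ down the recursion by point location, with the work dominated by the finest scales --- where one processes the $O(\nP)$ points, or in the bottleneck the $O(\nP^2)$ segments spanned by pairs of $\P$, against an $\widetilde{O}(\eps^{-1/2})$-size level structure --- gives the stated $\widetilde{O}(\nP^2/\sqrt{\eps})$ running time.

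The main difficulty is entirely combinatorial, not algorithmic. The straightforward constructions --- including the cutting-based weak $\eps$-cutting of the previous section, dualized --- only reach the exponent $2$ in the plane; shaving it down to $3/2+\alpha$ is the whole content of \cite{r-ibwen-18}. The delicate points are (i) the charging scheme behind the dichotomy: showing that a region that ``spreads'' at scale $j$ is forced to contain one of only $O(|R_j|^{4/3})$ canonical witnesses rather than $O(|R_j|^2)$, which is where the $k$-set and crossing-number machinery enters; and (ii) making the multi-scale bookkeeping tight enough that the per-scale costs telescope to $\eps^{-3/2}$ instead of $\eps^{-2}$. Getting both of these simultaneously tight is the crux, and we take the statement as given from \cite{r-ibwen-18}.
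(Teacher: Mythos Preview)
The paper does not prove this theorem at all: it is quoted verbatim from Rubin \cite{r-ibwen-18} and used as a black box in the rounding step of \secref{g:convex}. There is no ``paper's own proof'' to compare against. You correctly identify this at both ends of your write-up, and your closing line (``we take the statement as given from \cite{r-ibwen-18}'') is in fact exactly what the paper does.

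Your sketch of Rubin's argument is a plausible high-level summary --- multi-scale sampling, a localized/spread dichotomy, and the $k$-set bound replacing the trivial $O(|R_j|^2)$ count by $O(|R_j|^{4/3})$ --- but it is too schematic to stand as a proof on its own; points (i) and (ii) that you flag as ``delicate'' are indeed the entire technical content of \cite{r-ibwen-18}, and your outline does not actually carry them out. For the purposes of this paper that is fine: the correct move here is simply to cite Rubin, which is what the paper does.
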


Let $\alpha = \valX{\Q}$.  The above construction applies for discrete
point sets, but we can apply the construction to the points of $\Q$ by
including point $\p\in \Q$ with multiplicity
$v'(\p) = \floor{ 4|\Q| \valX{\p}}$ and compute a weak $\eps$-net
$\W \subseteq \Q$, with $\eps = 1/(4\alpha)$, of size
$O(\alpha^{3/2+\alpha})$ for some constant $\alpha> 0$.  This set can
be constructed in $\widetilde{O}(|Q|\alpha) = \widetilde{O}(\nP^8\alpha^{3/2})$ time.  Observe
that for any multiset of the points $S$,
$4|\Q|\valX{S} \ge v'(S) \ge 4|\Q|\valX{S} - |S|$.  Consider a convex
set $\pg$ with $v(\pg) = v(\pg \cap \Q) \geq 1/2 \geq 2 \eps \alpha$,
and observe that
\begin{equation*}
    v'(\pg\cap \Q)
    \geq
    4|\Q|\valX{\pg\cap \Q} - |\pg\cap \Q|
    \geq
    8 \eps \alpha |\Q| - |\pg \cap \Q|
    \geq
    |\Q|
    =
    \eps \cdot 4 |\Q| v(\Q)
    \geq
    \eps v'(\Q).
\end{equation*}
This implies that any convex set $\pg \in \BRP$ that contains no
points of $\W$ must have $v(\pg) < 1/2$.  The idea is now to test if
there exists a convex polygon $\pg \in \BRP$ that contains no point of
$\W$.  If no such polygon exists, then we have found the desired set
of guards (i.e., we successfully rounded the given LP solution).
Otherwise, we found a polygon $\pg \cap \W =\emptyset$ (i.e.,
$\valX{\pg} < 1/2$) -- namely, we found a constraint belonging to (**)
that is being violated.

\subsubsection{Searching for a bad polygon}

This step was easier for \probref{prob} as we only needed to check
every face of the arrangement of the computed lines.  However, here
there are exponentially many (canonical) convex sets that avoid the
set $\W$ of guards that need to be checked.
\begin{lemma}
    \lemlab{bad:polygon}%
    Given a set $\P$ of $\nP$ points, and another set $\W$ of at most
    $\nP$ points, one can decide, in $O(\nP^4 \log \nP)$ time, if
    there exists a closed convex set $\pg$ that satisfies
    $|\P \cap \pg| \geq \fr \nP$ and $|\W \cap \pg| = 0$.
\end{lemma}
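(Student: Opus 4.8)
The plan is to set up a dynamic program over the vertices of $\P$ that enumerates, in a structured way, all maximal convex polygons whose vertex set lies in $\P$ and whose interior avoids $\W$. The key observation is that a convex set $\pg$ maximizing $|\P\cap\pg|$ subject to $|\W\cap\pg|=0$ can be taken to be the convex hull of the points of $\P$ it contains; hence it suffices to search over convex polygons $Q$ with $\mathrm{vertices}(Q)\subseteq\P$, $\mathrm{int}(Q)\cap\W=\emptyset$, and to maximize the number of points of $\P$ lying inside $Q$. This is the natural ``largest empty (of $\W$) convex polygon with many $\P$-points'' problem, and the standard tool is the rotating-sweep DP of the Edelsbrunner--Guibas--Dobkin / Avis--Rappaport--Wenger flavor.

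Concretely, I would fix the lowest vertex $\pa\in\P$ of the sought polygon (try all $O(\nP)$ choices). Sort the remaining points of $\P$ and the points of $\W$ lying above $\pa$ by angle around $\pa$. Build a DAG whose nodes are ordered pairs $(\pb,\pc)$ of points of $\P$ with $\pc$ following $\pb$ in angular order around $\pa$, representing ``the last edge of the convex chain is $\pb\pc$''. There is an edge from $(\pb,\pc)$ to $(\pc,\pd)$ when the turn $\pb\to\pc\to\pd$ is a left (convex) turn and the triangle $\pa\pc\pd$ — the new sliver added to the polygon — contains no point of $\W$ in its interior; the weight of the node $(\pc,\pd)$ counts the $\P$-points newly enclosed. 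A longest/heaviest path in this DAG starting from some $(\pa,\cdot)$ and ending at some $(\cdot,\cdot)$ that can ``close up'' back to $\pa$ (i.e. the closing triangle is also $\W$-empty) yields the best convex polygon with $\pa$ as its bottom vertex. We then report ``yes'' iff for some choice of $\pa$ this maximum is $\ge\fr\nP$.

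For the running time: there are $O(\nP)$ choices of $\pa$, and for each, $O(\nP^2)$ DAG nodes and $O(\nP^3)$ DAG edges in the naive bound, giving $O(\nP^4)$ per apex before accounting for the emptiness tests; amortizing the $\W$-emptiness check of each sliver triangle costs an extra logarithmic factor (e.g., by maintaining, for each directed pair $(\pb,\pc)$, the angularly-sorted list of $\W$-points in the relevant cone and binary-searching), yielding $O(\nP^4\log\nP)$ overall. The main obstacle I anticipate is the bookkeeping that guarantees correctness of the emptiness condition: one must be careful that the incremental triangles $\pa\pc\pd$ added along a chain tile the polygon exactly (no double counting, no gaps), so that ``every added triangle is $\W$-empty'' is equivalent to ``the whole polygon is $\W$-empty'', and likewise that each point of $\P$ inside the final polygon is counted exactly once — this is where the fixed bottom-vertex $\pa$ and the angular order around it are essential, and where a clean statement of which points a sliver ``claims'' (say, those in the half-open region swept) must be pinned down to avoid off-by-one and boundary issues between open/closed sets. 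Once that invariant is in place, the DP recurrence and the time bound follow routinely.
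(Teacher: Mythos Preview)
Your approach is correct and reaches the claimed bound, but it decomposes the candidate polygon differently from the paper. The paper does a left-to-right sweep: it enumerates the $O(\nP^4)$ vertical trapezoids whose floor and ceiling are segments spanned by pairs of points of $\P$, discards those containing a point of $\W$, computes $|\P\cap\trap|$ for each surviving $\trap$ via simplex range counting, and builds a \DAG in which an edge joins two trapezoids that are left/right compatible; a slope-chaining trick keeps the edge count at $O(\nP^4)$ (rather than the naive $O(\nP^5)$), and a single longest-path pass finishes. You instead fix the lowest vertex and fan-triangulate, in the Dobkin--Edelsbrunner--Overmars / Eppstein \etal style, trading the trapezoid machinery and the edge-reduction trick for an outer loop over $O(\nP)$ apices. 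Both decompositions tile the polygon exactly and support the same ``check $\W$-emptiness per piece, accumulate $\P$-counts, longest path'' scheme; the paper's version has the minor advantage of being a single sweep with no distinguished vertex.

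Two small corrections. First, your per-apex cost is $O(\nP^3)$, not $O(\nP^4)$: the sliver $\pa\pc\pd$ depends only on the apex and the target node $(\pc,\pd)$, so weights and $\W$-emptiness are per node ($O(\nP^2)$ of them), not per edge; summing over $O(\nP)$ apices gives $O(\nP^4)$ for the DP plus $O(\nP^3\log\nP)$ for the range queries, well within the stated $O(\nP^4\log\nP)$. Second, the lemma asks for a \emph{closed} $\pg$ with $\W\cap\pg=\emptyset$, so $\W$-points on the boundary are forbidden (not just interior ones), and degenerate optima---a segment or a single point containing $\geq\fr\nP$ points of $\P$---should be handled by a separate trivial check, as they are not proper fans.
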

\begin{proof}
    For the simplicity of exposition, we assume the $x$-values of all
    the points under consideration are all distinct -- this can be
    ensured by slightly perturbing the points.  Arguing as above, it
    suffices to consider only polygons $\pg$ that are formed by the
    convex hull of some subset of points in $\P$. Consider two
    segments with endpoints in $\P$ and their vertical decomposition
    -- there might be at most one vertical trapezoid has these two
    segments as a floor and ceiling segments. Let $\VTA$ be the set of
    all such trapezoids. Clearly, the set $\VTA$ can be computed in
    $O( \nP^4)$ time. Furthermore, using simplex range searching, one
    can count for each such trapezoid $\trap$ how many points of $\P$
    it contains (ignoring say points that lie on its right wall),
    denoted by $w(\trap)$, and how many points of $\W$ it
    contains. With $O(\nP^{2+o(1)})$ preprocessing, such queries can
    be answered in $O( \log \nP)$ time \cite{csw-qoubs-92}. Let $\VT$
    be the set $\VTA$ after we remove from it all the trapezoids that
    contains any point of $\W$. The set $\VT$ can be computed in $O(
    \nP^4 \log \nP)$ time, and has size $O(\nP^4)$.

    Two trapezoids $\trap_1, \trap_2 \in \VT$ that share a vertical
    wall are \emphw{compatible}, if there is a polygon
    $\pg \in \Polygons$ such that $\pg \cap \slabX{\trap_1} = \trap_1$
    and $\pg \cap \slabX{\trap_2} = \trap_2$, where $\slabX{\trap_i}$
    is the minimal vertical strip containing $\trap_i$. Note that this
    is a local condition and can be checked in constant time.

    We create a \DAG $\G$ over $\VT$, where an edge
    $\trap_1 \rightarrow \trap_2$ is in $\G$ if $\trap_1$ and
    $\trap_2$ are compatible, and $\trap_1$ is to the left of
    $\trap_2$. By the assumption of unique $x$-coordinates, $\trap_1$
    and $\trap_2$ must share either the bottom or top supporting
    lines.
    The \DAG $\G$ has $O(\nP^4)$ vertices, and
    potentially there are $O(\nP)$ out going edges from each vertex --
    as by assumption two adjacent compatible trapezoids changes only either the
    floor or ceiling supporting segment.  This results in a graph $\G$
    with $O(\nP^5)$ edges. However, by adding special entrance
    vertices to each trapezoid, and chaining them by slope of the
    changing segment, one can reduce the number of edges to
    $O(\nP^4)$. The graph $\G$ can be computed in $O(\nP^4 \log \nP)$ time.

    Note, that any convex polygon in $\BRP$ corresponds to a maximal
    path in the \DAG $\G$. A trapezoid $\trap \in \VT$ is a
    \emphi{start} (resp. \emphi{final}) trapezoid if its left (resp.,
    right) wall is a vertex. Note, that the leftmost (resp.,
    rightmost) trapezoid in any polygon of $\Polygons$ must be a start
    (resp., final) trapezoid.

    The problem thus reduces to computing the longest
    path in the DAG $\G$ with vertex weights $w(\trap)$ for each
    $\trap \in \VT$. This can be done with a standard dynamic program
    in time linear in the size of the DAG, which takes
    $O(\nP^4)$ time.
\end{proof}

\subsection{The result}

The above algorithm provides us with a procedure for computing a bad
polygon if it exists for the currently suggested solution -- namely,
we can use it in the round-and-cut framework.

\begin{theorem}
    \thmlab{g:convex:polygons}%
    Given an instance
    $\Inst =(\P_1,\fr_1, \ldots, \P_\numS, \fr_\numS)$ of the
    \probrefY{prob:guarding}{guarding problem} in the plane of size
    $\nP = \sum_i |\P_i|$, with $\numS$ sets and $\fr = \min_i \fr_i$,
    one can compute a set $\W$ of $O(\opt^{3/2+\alpha})$ points, for
    any fixed $\alpha> 0$, such that $\W$ is a weak $\fr_i$-net for
    $\P_i$ for all $i\in \IRX{\numS}$. That is, for any convex polygon
    $\body$, and any $j \in \IRX{\numS}$, if
    $|\body \cap \P_i| > \fr_i |\P_i|$ then
    $\body \cap W \neq \emptyset$.  The algorithm has running time
    polynomial in $\nP$, assuming $\fr_1, \ldots, \fr_\numS \geq 1/\nP$.
\end{theorem}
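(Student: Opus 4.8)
The plan is to follow the round-and-cut framework exactly as set up in the preceding subsections, now using the weak $\eps$-net construction (Rubin's theorem) as the rounding subroutine and \lemref{bad:polygon} as the separation oracle. First I would observe that the optimal solution $\opt$ is bounded: since each $\P_i$ admits a weak $\fr_i$-net of size $O(\fr_i^{-3/2-\alpha})$, taking the union over all $i$ gives a valid guarding set, so $\opt = O(\sum_i \fr_i^{-3/2-\alpha}) = O((\numS/\fr)^{O(1)})$. This gives an a-priori polynomial upper bound on the search range for the \LP parameter $t$ in \Eqref{lp:t} (more precisely the analogue (**)), which I would run an exponential search over, from $1$ up to this bound.

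For a fixed guess $t$, I would solve the \LP (**) with the ellipsoid method, using the following separation oracle on a queried fractional assignment $x$: check the first two families of constraints directly; if they hold, set $\alpha = \valX{\Q}$, $\eps = 1/(4\alpha)$, form the multiset of $\Q$ with multiplicities $v'(\p) = \floor{4|\Q|\valX{\p}}$, and compute a weak $\eps$-net $\W$ of size $O(\alpha^{3/2+\alpha})$ via Rubin's theorem. The inequality chain already displayed in the rounding-scheme subsection shows that any $\pg \in \BRP$ avoiding $\W$ has $v(\pg) < 1/2$; so I then invoke \lemref{bad:polygon} (applied to $\P = \cup_i \P_i$, or rather once per $i$ with the appropriate threshold $\fr_i \nP_i$, since a bad polygon is bad for \emph{some} index $j$) to test whether some convex polygon captures $\ge \fr_j |\P_j|$ points of $\P_j$ while missing $\W$. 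If none exists, $\W$ is a valid guarding set and we are done; if one exists, it yields a violated constraint $\valX{\pg} = \sum_{\p \in \Q \cap \pg} x_\p \ge 1$ (violated because $\valX{\pg} < 1/2$) which is returned to the ellipsoid solver. When the \LP for guess $t$ turns out infeasible, the ellipsoid method certifies it and we increase $t$; the smallest $t$ for which rounding ever succeeds satisfies $t \le \opt$, so the output has $O(\opt^{3/2+\alpha})$ points.

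For the running time I would note: the number of ellipsoid iterations is polynomial in $|\Q| = O(\nP^4)$ and in $\log$ of the largest coefficient, which is polynomial in $\nP$; each iteration runs Rubin's net construction in $\widetilde{O}(|\Q|^2/\sqrt{\eps}) = \poly(\nP)$ time and runs \lemref{bad:polygon} in $O(\nP^4 \log \nP)$ time; and there are $O(\log \opt) = O(\log \nP)$ guesses of $t$ (using the assumption $\fr_i \ge 1/\nP$ to bound $\opt$ polynomially). All together the running time is polynomial in $\nP$.

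The main obstacle I anticipate is a bookkeeping subtlety rather than a conceptual one: \lemref{bad:polygon} as stated handles a \emph{single} point set $\P$ with a single fraction $\fr$, whereas the guarding instance has $\numS$ sets with distinct fractions, and ``bad'' means bad for at least one index. I would resolve this by running the oracle of \lemref{bad:polygon} once for each $i \in \IRX{\numS}$ with target $\fr_i \nP_i$ and the same guard set $\W$, returning the first violated constraint found; this multiplies the oracle cost by $\numS \le \nP$, which is absorbed into the polynomial bound. A second, minor point to be careful about is the clash between the two roles of the symbol $\alpha$ (Rubin's slack exponent versus the \LP value $\valX{\Q}$) inherited from the earlier text; I would simply keep them textually separated as the preceding subsection already does, and state the final size bound as $O(\opt^{3/2+\alpha})$ for the fixed exponent $\alpha$.
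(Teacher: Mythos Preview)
Your proposal is correct and follows essentially the same round-and-cut approach as the paper: restrict to the candidate set $\Q$ of size $O(\nP^4)$, exponentially search on $t$, solve the implicit \LP via the ellipsoid method, round using Rubin's weak $\eps$-net construction, and use \lemref{bad:polygon} (applied once per index $i$) as the separation oracle. The only cosmetic difference is that the paper uses the cruder upper bound $\opt \le \nP$ (just guard every point of $\P$) for the search range on $t$, whereas you use the per-set weak $\eps$-net bound; the paper explicitly notes your refinement is possible but unnecessary.
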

\begin{proof}
    Observe that $m$ is a naive upper bound on the number of guards
    needed, as we can simply guard all the points.
    We can get better upper bounds by taking weak $\eps$-nets
    of each class of points, but this is not needed.

    We restrict our attention to our candidate set $\Q$ with
    $\nL = O(\nP^4)$ points by \itemref{guard:locations}.  We now run the
    round-and-cut algorithm with exponential search on $t$ from $1$ up
    to $m$, stopping as soon as the algorithm
    succeeds.  The number of separation oracle calls performed in each
    attempt to solve the \LP is $\nL^{O(1)} = \nP^{O(1)} $.  Each rounding
    step takes $\widetilde{O}(\nP^8 \opt^{3/2})$ where $\opt$ is at most $\nP$.
    The total running time is $\nP^{O(1)}$.
\end{proof}

\section{Conclusions}

We revisited the natural geometric divide-and-conquer
\probrefY{prob}{reduction problem}. In the process we introduced a new
kind of weak $\eps$-nets for corridors (i.e., weak $\eps$-cutting). We
presented a non-trivial construction that provides such nets of size
$\tldO(1/\eps^{3/2})$. Using this construction of nets as a rounding
scheme, used within the round-and-cut framework, we were able to get a
$\tldO( \sqrt{\opt})$-approximation to the optimal solution, where
$\opt$ was the size of the optimal solution. While this approximation
quality is somewhat ``underwhelming'' it is still a significant
improvement when $\opt$ is small (say a constant), where previously
only logarithmic approximation was known, and previous approaches
seems unlikely to lead to a sublogarithmic approximation in the
general case.

We then solved the dual problem of guarding a point set against convex
regions by inserting guards, except that in this case in addition to
the rounding provided by the ``standard'' weak $\eps$-net, we had to
use dynamic programming to find bad convex regions if they exists.

There are numerous open problems for further research raised by our
work. The first one is improving the approximation quality even
further. Secondly, further improving the size of the construction of
weak $\eps$-nets for corridors (i.e., weak $\eps$-cutting for
lines). The bound we get is mysteriously very similar to the best
known bound for the weak $\eps$-net for points (for convex
regions). It is natural to further investigate this
connection. Ultimately, improving and simplifying Rubin's construction
in 2d seems like a worthy problem for further research.

Beyond that, this work emphasize the ``rounding is approximation''
approach. It is natural to wonder if there are other natural geometric
problems where better rounding is possible because of the geometry,
which would lead to better approximation algorithms.

\BibTexMode{%
   \bibliographystyle{alpha}
   \bibliography{weak_cuttings}
}%
\BibLatexMode{\printbibliography}

\end{document}